\renewcommand\bf\bfseries
\addspace\printfield{pages}\addspace
\newcommand{\leqnomode}{\tagsleft@true\let\veqno\@@leqno}
\newcommand{\reqnomode}{\tagsleft@false\let\veqno\@@eqno}
\numberwithin{equation}{section}
\newcommand\myshade{85}
\colorlet{mylinkcolor}{violet}
\colorlet{mycitecolor}{YellowOrange}
\colorlet{myurlcolor}{Aquamarine}
\tikzset{arrow data/.style 2 args={%
      decoration={%
         markings,
         mark=at position #1 with \arrow{#2}},
         postaction=decorate}
      }%
\definecolor{ct_black}{HTML}{000000}
\definecolor{ct_orange}{HTML}{ED872D}
\definecolor{ct_purple}{HTML}{7A68A6}
\definecolor{ct_blue}{HTML}{348ABD}
\definecolor{ct_turquoise}{HTML}{188487}
\definecolor{ct_red}{HTML}{E32636}
\definecolor{ct_pink}{HTML}{CF4457}
\definecolor{ct_green}{HTML}{467821}
\definecolor{ct2_green}{HTML}{9FF781}
\definecolor{ct2_green_dark}{HTML}{088A08}
\theoremstyle{plain}
\newtheorem{thm}{\protect\theoremname}[section]
\theoremstyle{plain}
\newtheorem{lem}[thm]{\protect\lemmaname}
\theoremstyle{plain}
\newtheorem{cor}[thm]{\protect\corollaryname}
\theoremstyle{plain}
\newtheorem{proposition}[thm]{\protect\propositionname}
\theoremstyle{plain}
\newtheorem{claim}[thm]{\protect\claimname}
\theoremstyle{remark}
\newtheorem{rem}[thm]{\protect\remarkname}
\theoremstyle{definition}
\newtheorem{defn}[thm]{\protect\definitionname}
\theoremstyle{plain}
\providecommand{\assumptionname}{Assumption}
\providecommand{\claimname}{Claim}
\providecommand{\corollaryname}{Corollary}
\providecommand{\definitionname}{Definition}
\providecommand{\lemmaname}{Lemma}
\providecommand{\propositionname}{Proposition}
\providecommand{\remarkname}{Remark}
\providecommand{\theoremname}{Theorem}
\providecommand{\examplename}{Example}
\crefname{section}{Section}{Sections}
\crefname{example}{Example}{Examples}
\crefname{appendix}{Appendix}{Appendices}
\crefname{figure}{Figure}{Figures}
\crefname{assumption}{Assumption}{Assumptions}
\crefname{thm}{Theorem}{Theorems}
\crefname{lem}{Lemma}{Lemmas}
\crefname{table}{Table}{Tables}
\newtheorem*{lem*}{\protect\lemmaname}
\newcommand{\ee}{\operatorname{e}}
\newcommand{\ii}{\operatorname{i}}
\newcommand{\ZZ}{\mathbb{Z}}
\newcommand{\NN}{\mathbb{N}}
\newcommand{\RR}{\mathbb{R}}
\newcommand{\CC}{\mathbb{C}}
\newcommand{\calB}{\mathcal{B}}
\newcommand{\calH}{\mathcal{H}}
\newcommand{\bbLambda}{\mathbb{\Lambda}}
\newcommand\norm[1]{\left\lVert#1\right\rVert}
\newcommand\abs[1]{\left|#1\right|}
\newcommand{\ip}[2]{\langle #1, #2 \rangle}
\newcommand{\dif}{\operatorname{d}\!} 
\newcommand{\tr}{\operatorname{tr}}
\newcommand{\Id}{\mathds{1}}
\newcommand{\dist}{\mathrm{dist}}
\newcommand{\Chern}{\operatorname{Chern}}
\newcommand{\Zak}{\operatorname{Zak}}
\newcommand{\sgn}{\operatorname{sgn}}
\newcommand{\findex}{\operatorname{index}}
\newcommand{\ind}{\findex}
\newcommand{\supp}{\operatorname{supp}}
\newcommand{\Sgntr}{\operatorname{Signature}}
\newcommand{\Sf}{\operatorname{Sf}}
\newcommand{\br}[1]{\left(#1\right)}
\DeclareMathOperator{\spec}{\sigma}
\newcommand{\im}{\operatorname{im}}
\newcommand{\gap}{\operatorname{gap}}
\newcommand{\eq}[1]{\begin{align*}#1\end{align*}}
\newcommand{\eql}[1]{\begin{align}#1\end{align}}
\newcommand{\polar}{\operatorname{pol}}
\title{The Loring--Schulz-Baldes Spectral Localizer Revisited}
\author{\href{mailto:gberkolaiko@tamu.edu}{Gregory Berkolaiko}\\
	{\footnotesize Department of Mathematics, Texas A\&M University}\\
    \href{mailto:jacobshapiro@princeton.edu}{Jacob Shapiro}\\
	{\footnotesize Department of Mathematics, Princeton University}\\
 \href{mailto:beyer.white@gmail.com}{Beyer Chase White}\\
	{\footnotesize IST Austria }
}
\begin{document}
\reqnomode

\maketitle

\begin{abstract}
  The spectral localizer, introduced by Loring in 2015 and
  Loring and Schulz-Baldes in 2017, is a method to compute the
  (infinite volume) topological invariant of a quantum Hamiltonian on
  $\ZZ^d$, as the signature of the (finite) localizer matrix.
  We present a direct and elementary spectral-theoretic proof treating the $d=1$ and $d=2$ cases on an
  almost equal footing. Moreover, we re-interpret the localizer as a higher-dimensional topological insulator via the bulk-edge correspondence.
\end{abstract}

\section{Introduction}
In the study of topological insulators \cite{Hasan_Kane_2010}, quantum mechanical Hamiltonians (or their associated Fermi projections) are labelled by a topological index: a continuous map from the space of such systems into a discrete group. The most well-known example being the two-dimensional Chern number, which also corresponds to an experimentally measurable quantity, the Hall conductivity. Mathematically, these indices arise in various forms: as winding numbers of functions, as Chern characteristic classes of vector bundles, as index-pairings between K-theory group elements and a Dirac operator, or as Fredholm indices of certain operators on a Hilbert space. With no exception, all of these topological indices are zero if the system is truncated to a finite volume in real space. Physically, the intuition for this is provided by a finite-width Hall insulator: it will have edge currents running in opposite directions on its two opposite boundaries, producing zero in total. Mathematically, one perspective on this problem is afforded by the Fredholm index being zero for any finite square matrix: by the rank-nullity theorem, the dimension of the kernel equals the dimension of the cokernel.  Another way to view this issue is to note that all topological indices may be written in the form $\tr([A,B])$ for appropriate operators $A,B$ (for instance, the Chern number may be written as $2\pi\ii\tr([P\Lambda_1P,P\Lambda_2P])$ with $P$ the Fermi projection and $\Lambda_j\equiv\chi_{\NN}(X_j)$, $j=1,2$). If these are truncated to finite space so that $A,B$ are matrices, then $\tr(AB) = \tr(BA)$ and the expression vanishes.

The issue just described poses a problem when trying to run numerical calculations on topological insulators. To be sure, most toy models in physics are \emph{translation invariant}, in which case the indices reduce to winding of vector bundles, which can be evaluated numerically despite the system being infinite volume. However, for \emph{disordered} systems, it is not really clear how to proceed. To that end, in pioneering studies \cite{Lor15,LSB1}, Loring, and later, Loring and Schulz-Baldes introduced a new way to run finite-volume calculations: \emph{the spectral localizer}. In this context, one should also mention \cite{Prodan2017,Taub18,Hastings2010} which offer alternative perspectives on this question. 

The spectral localizer construction is as follows: 
\begin{itemize}
\item Add to the Hamiltonian an \emph{unbounded} self-adjoint perturbation which however keeps the gap open due to its special structure.
\item Truncate the result to a finite Hermitian matrix.  The truncation is performed where the perturbation is large; since the low energy eigenfunctions are exponentially suppressed in area of large perturbation, truncation has a vanishingly small effect on the gap states.
\item Compute the signature (the number of positive minus the number of negative eigenvalues) of the truncated matrix. 
\end{itemize} 
The result is twice the topological index of the infinite-volume Hamiltonian one had started with. The first proof used K-theory of fuzzy spheres (for $d=1$ using the eta invariant) \cite{LSB1, LSB2} and the second proof \cite{EJSB, LSB3} used Phillips' theorem connecting index pairings and the spectral flow \cite{Phillips}.

Our goal here is to re-derive the Loring--Schulz-Baldes result, motivate it via the bulk-edge correspondence and provide an elementary spectral flow-based argument that treats both odd and even dimensions in a unified way (cf. \cite{LoringSchulzBaldes2019SpectralFlowOdd,LozanoViescaSchoberSchulzBaldes2019ChernNumbers} which separate into even and odd $d$). Moreover, our choice of the perturbation is slightly different from the original one, leading to more convenient estimates and a bounded infinite-volume localizer.

\subsection{Setting}
We consider non-interacting insulators in $d$ space
dimensions. Hence the appropriate Hilbert space is $\calH :=
\ell^2(\ZZ^d)\otimes\CC^{N}$ for some (fixed once and for all)
$N\in\NN$, the internal number of degrees of freedom. We are
interested in \emph{insulators} which in this paper means a
spectral gap at $E_F:=0$ (without loss of generality). Moreover,
we assume locality, 
\eql{\label{eq:locality}\norm{H_{x,y}}\leq
  C\ee^{-\mu\norm{x-y}}\qquad(x,y\in\ZZ^d)} for some
$C<\infty,\mu>0$. Here by $H_{x,y}$ we mean the $N\times N$
matrix whose $i,j$ matrix elements is given by
$\ip{\delta_x\otimes e_i}{H (\delta_y\otimes e_j)}$ with
$\Set{e_j}_j$ the standard basis of $\CC^N$ and $\Set{\delta_x}_{x\in\ZZ^d}$ the
canonical position basis on $\ell^2(\ZZ^d)$.

Let us specify to $d=1$ for simplicity. When present, the chiral symmetry condition is specified as follows: if $\calH_{\pm}:=\ell^2(\ZZ)\otimes\CC^{N/2}$ is half the Hilbert space (for chiral systems we assume $N\in2\NN$) so that $\calH \cong \calH_+\oplus\calH_-$ then in that direct sum decomposition, $H$ is given as an off-diagonal block operator of the form \eql{\label{eq:chiral constraint}H = \begin{bmatrix}
    0 & S^\ast \\
    S & 0
  \end{bmatrix} } for some $S\in\calB(\calH_+\to\calH_-)$. The first two properties of $H$ are inherited by the operator $S$: it is invertible and obeys a similar estimate as \Cref{eq:locality}.

In this setting it is well-known that in the chiral $d=1$ case \cite{PSB_2016,Graf_Shapiro_2018_1D_Chiral_BEC}, associated to $H$ is a topological index, the Zak phase \cite{Zak_PhysRevLett.62.2747} \footnote{Originally defined only in the translation invariant setting.}. It is described as follows: let $\Lambda := \chi_{\NN}(X)$ be the projection onto the right-half space ($\chi$ is the characteristic function and $X$ is the position operator on $\calH$). Then the Zak phase is given by the Fredholm index
\begin{align} \label{eq:Zak phase_S} \Zak(H) &:= \findex(\Lambda S \Lambda + \Lambda^\perp)\\ \label{eq:Zak phase_U} &= \findex(\Lambda S^\flat \Lambda + \Lambda^\perp)\in\ZZ\, \end{align}
where $S^\flat:=\operatorname{pol}(S)$ is the polar part of the operator $S$.
To see that the operators in \Cref{eq:Zak phase_S,eq:Zak phase_U} are Fredholm, we note that  \Cref{eq:locality} implies that $[\Lambda,S]$ is compact which is a property inherited by $[\Lambda,S^\flat]$ (as long as the gap is open). This in turn implies the Fredholm property by Atkinson's theorem.

So far we have been discussing what is known as the \emph{spectral gap regime} (where $H$ has a spectral gap).  We note in passing that in the mobility gap regime (when $H$ has no spectral gap but the states about $E_F=0$ obey a dynamical estimate which still makes $H$ into an insulator; see \cite{EGS_2005,Shapiro20}) the appropriate definition is only via \Cref{eq:Zak phase_U} and not \Cref{eq:Zak phase_S}, since the operator appearing on the right-hand side of \Cref{eq:Zak phase_S} does not yield a Fredholm operator. Finally, we (re-)emphasize that if we were to truncate $H$ to a finite volume and attempt the calculation of the index the result would always be zero.

In the non-chiral $d=2$ case we have the integer quantum Hall
effect, whose associated physical observable is the Hall
conductivity, connected to the Chern number \cite{Graf07}. We
write it, following \cite{Bellissard_1994JMP....35.5373B}, as
\eql{ \Chern(H) := \findex(PLP+P^\perp)\in\ZZ } with
$P:=\chi_{(-\infty,0)}(H)$ the Fermi projection and
$L:=\exp\br{\ii\arg\br{X_1+\ii X_2}}$ the Laughlin flux insertion
operator. It is well known (see e.g. \cite[Lemma A.1]{BSS23}) that
\Cref{eq:locality} implies $[P,L]$ is compact, which then implies
the operator in the index is Fredholm again by Atkinson's theorem.

\subsection{The Loring--Schulz-Baldes spectral localizer construction}

We describe now the spectral localizer construction and the theorem that relates
the signature of its finite volume truncation to the Zak phase
$\Zak(H)$ and to the Chern number $\Chern(H)$.

\newcommand{\Bell}{{\mathcal{B}_\ell}}
For any $\ell\in\NN$, let \eq{
\calB_\ell := \Set{x \in \ZZ^d | \norm{x} \leq \ell}
} be the box of radius $\ell$
and 
\begin{equation}
  \label{eq:Hell_def}
  \calH_\ell := \CC^{\Bell} \otimes \CC^N  
\end{equation}
be the finite-volume Hilbert space based on $\Bell$.  Viewing
$\calH_\ell$ as a subspace of $\calH$, let \eq{J_\ell:
  \calH_{\ell}\to\calH} be the inclusion operator (i.e., $J_\ell$
extends a wave-function by zero outside the finite box). Then
$J_\ell^* J_\ell:\calH_\ell \to \calH_\ell$ is the identity and
$J_\ell J_\ell^\ast:\calH\to\calH$ is the orthogonal projector onto
$\calH_\ell$.  With this notation,
\eql{\label{eq:H_truncation}H_{\ell}:=J_{\ell}^{\ast}H J_{\ell}} is
the finite-volume (Dirichlet) truncation of $H$, a finite
matrix. We similarly define the truncation $S_\ell$, a
finite matrix.

Let $X_\ell$ be the position operator on the finite box, amended to be
invertible.  More precisely, in $d=1$,
\begin{equation}
  \label{eq:position1d}
  X_\ell : \CC^\Bell \otimes \CC^{N/2}
  \to \CC^\Bell \otimes \CC^{N/2}
  \qquad
  X_\ell (\delta_x \otimes v) =
  \begin{cases}
    x (\delta_x \otimes v), & \ x\neq 0, \\
    \delta_0 \otimes v, & x=0.
  \end{cases}
\end{equation}
In $d=2$ we use the 
\emph{complex-valued position operator},
\begin{equation}
  \label{eq:position2d}
  X_\ell : \calH_\ell \to \calH_\ell
  \qquad
  X_\ell (\delta_{x,y} \otimes v) =
  \begin{cases}
    (x+\ii y) (\delta_{x,y} \otimes v), & \ x+\ii y \neq 0, \\
    \delta_{0,0} \otimes v, & x+\ii y=0.
  \end{cases}
\end{equation}

We define the finite-volume localizer by
\eql{\label{eq:finite-volume localizer} L_{\kappa,\ell}
  := \begin{cases} \br{1-\kappa}\begin{bmatrix}
      0 & S_\ell^\ast \\
      S_\ell & 0
    \end{bmatrix} + \kappa \begin{bmatrix}X_\ell & 0 \\ 0 & -X_\ell\end{bmatrix} & d=1 \\
    \br{1-\kappa}\begin{bmatrix}
      H_\ell & 0 \\ 0 & -H_\ell
    \end{bmatrix}+\kappa\begin{bmatrix}
      0 & X_\ell^\ast \\ X_\ell & 0
    \end{bmatrix} & d=2
  \end{cases}
} on $\calH_\ell$ or $\calH_\ell\oplus\calH_\ell$ respectively.  Note
that in the $d=2$ case there is a doubling of the Hilbert space on top of
the original degrees of freedom.

\begin{thm}[Loring--Schulz-Baldes]\label{thm:main}
  Let $H$ be a Hamiltonian obeying the locality condition
  \Cref{eq:locality} with locality constants $C<\infty$, $\mu>0$.  Assume it is
  gapped:
  \begin{equation}
    \label{eq:gap_def}
    \gap(H) := \norm{H^{-1}}^{-1}>0,
  \end{equation}
  and, in $d=1$, assume further it is
  chiral.  Let
  \begin{equation}
    \label{eq:constraint on maximal kappa}
    \kappa_\star:=\frac12\frac{\gap(H)^2}{\gap(H)^2+D^d},
    \qquad
    D:=\frac{C}{\cosh(\mu)-1}.      
  \end{equation}
  Then, for all
  \begin{equation}
    \label{eq:constraint_on_ell}
    \ell > \frac{\norm{H}}{\kappa_\star}
    \max\Set{2,1+\frac{8\norm{H}}{\gap(H)}},
  \end{equation}
  we have
  \eql{\frac12 \Sgntr(L_{\kappa_\star,\ell})=\begin{cases}
      \Zak(H) & d=1,\\
      \Chern(H) & d=2.
    \end{cases}}  
\end{thm}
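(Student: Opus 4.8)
The plan is to split the proof into an \emph{analytic} part and a \emph{topological} part. The analytic part is to show that, for $\kappa=\kappa_\star$ and $\ell$ in the stated range, the finite Hermitian matrix $L_{\kappa_\star,\ell}$ is invertible, with a quantitative lower bound on $\gap(L_{\kappa_\star,\ell})$; this is exactly what makes $\Sgntr(L_{\kappa_\star,\ell})$ well defined and insensitive to small changes of $\kappa$ and $\ell$, and it is where the explicit constants $\kappa_\star$ and the $\ell$-threshold are forced. The topological part is to evaluate this (now locally constant) signature by deforming $L_{\kappa,\ell}$ through invertible Hermitian matrices to a model operator whose signature is computed by a spectral-flow/winding argument, the answer being $2\Zak(H)$ resp.\ $2\Chern(H)$.

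For the analytic part I would treat $d=1,2$ together by noting that in both cases $L_{\kappa,\ell}=\left[\begin{smallmatrix}A&B^{*}\\ B&-A\end{smallmatrix}\right]$ with $A$ Hermitian, where $(A,B)=(\kappa X_\ell,(1-\kappa)S_\ell)$ in $d=1$ and $(A,B)=((1-\kappa)H_\ell,\kappa X_\ell)$ in $d=2$. Squaring, $L_{\kappa,\ell}^{2}=\left[\begin{smallmatrix}A^{2}+B^{*}B & [A,B^{*}]\\ [B,A] & A^{2}+BB^{*}\end{smallmatrix}\right]$, and the crucial point is that the off-diagonal entry equals $\kappa(1-\kappa)$ times a commutator of a position operator with the Hamiltonian data ($H_\ell$ or $S_\ell$), which by \eqref{eq:locality} is \emph{bounded}, with norm estimated in terms of $C$ and $\mu$ (this is where $D=C/(\cosh\mu-1)$, and the power $D^{d}$, enter). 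It then suffices to bound $A^{2}+B^{*}B$ and its partner below by more than that cross term. On vectors supported near $\partial\calB_\ell$ the $\kappa^{2}$(position)$^{2}$ contribution is $\gtrsim\kappa^{2}\ell^{2}$; elsewhere one wants $(1-\kappa)^{2}$(Hamiltonian)$^{2}\gtrsim(1-\kappa)^{2}\gap(H)^{2}$, but $H_\ell$ (resp.\ $S_\ell$) is not bounded below near the boundary because of truncation/edge modes, so this must be upgraded via a Combes--Thomas estimate: any low-energy vector of $H_\ell$ (resp.\ $S_\ell$) decays exponentially away from $\partial\calB_\ell$ at a rate set by $\gap(H)$, $\mu$. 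Thus an approximate null vector of $L_{\kappa,\ell}$ would have to be simultaneously localized near the origin (forced by smallness of the position block on it) and exponentially localized near $\partial\calB_\ell$ (forced by smallness of the Hamiltonian block) — impossible once $\ell$ exceeds \eqref{eq:constraint_on_ell}; optimizing the two competing lower bounds against $\kappa(1-\kappa)D^{d}$ yields precisely $\kappa_\star$.

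For the topological part I would use that $\Sgntr$ is invariant under norm-continuous deformations through invertible Hermitian matrices: first spectrally flatten the Hamiltonian data, replacing $H_\ell$ by the truncation of $2P-1$ in $d=2$ and $S_\ell$ by the truncation of $S^{\flat}$ in $d=1$, checking as above that invertibility survives. Next, at $\kappa=0$ the localizer has pure off-diagonal (for $d=1$) resp.\ diagonal $\pm$-block (for $d=2$) structure, so $\Sgntr(L_{0,\ell})=0$; hence
\[
  \tfrac12\Sgntr(L_{\kappa_\star,\ell})=\Sf\bigl(\kappa\in[0,\kappa_\star]\mapsto L_{\kappa,\ell}\bigr),
\]
a spectral flow of eigenvalues across $0$ which is concentrated at small $\kappa$ (the analytic estimate forbids crossings once $\kappa$ is near $\kappa_\star$). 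Finally I would compute this spectral flow by passing to infinite volume — legitimate because the low-lying spectral subspace of the localizer is exponentially concentrated near the origin, far from $\partial\calB_\ell$ — where the flattened localizer is a discrete domain-wall (Callias-type) Dirac operator, the position block playing the role of the "mass" and the Hamiltonian block the Dirac part; its spectral asymmetry is then identified with the winding number of $S^{\flat}$, i.e.\ $\Zak(H)$ by \eqref{eq:Zak phase_U}, in $d=1$, and with the index pairing $\findex(PLP+P^{\perp})=\Chern(H)$ in $d=2$. The same computation has a bulk--edge reading: the position perturbation manufactures an artificial "edge," the in-gap eigenvalues of $L_{\kappa,\ell}$ are the associated edge spectrum, and their net flow across $0$ is the bulk invariant of $H$.

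I expect the analytic part — invertibility with the \emph{explicit} constants $\kappa_\star$ and the $\ell$-threshold — to be the main obstacle, since one must control the non-commuting interplay of the position block and the Hamiltonian block quantitatively (the Combes--Thomas localization of edge modes, the two-region argument, and the optimization) rather than merely qualitatively. A secondary difficulty is bookkeeping in the topological part: making sure the flattening and the removal of the truncation genuinely stay inside the invertible operators, so that $\Sgntr$ is unchanged, without circular use of the final formula; once the problem is reduced to the flattened, infinite-volume Callias form, the remaining computation should be short.
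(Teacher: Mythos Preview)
Your proposal has a genuine gap in the topological part, and it stems from choosing the wrong reference point for the signature.

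You anchor the computation at $\kappa=0$, claiming $\Sgntr(L_{0,\ell})=0$ and then $\tfrac12\Sgntr(L_{\kappa_\star,\ell})=\Sf([0,\kappa_\star]\ni\kappa\mapsto L_{\kappa,\ell})$. But $L_{0,\ell}$ is built from the \emph{truncated} $S_\ell$ (resp.\ $H_\ell$), which is typically \emph{not} invertible: it carries edge zero-modes, so the signature there is ill-defined and the spectral flow from $\kappa=0$ is not meaningful in the usual sense. More damagingly, your justification for passing to infinite volume---that ``the low-lying spectral subspace of the localizer is exponentially concentrated near the origin''---is false on $[0,\kappa_\star]$: for small $\kappa$ the localizer is dominated by the Hamiltonian block, and the near-zero states are precisely the edge modes of $H_\ell$ (or $S_\ell$), which by your own Combes--Thomas reasoning live near $\partial\calB_\ell$, not near the origin. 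Consequently the finite- and infinite-volume flows on $[0,\kappa_\star]$ cannot be compared this way; in fact the \emph{infinite}-volume localizer (built from the genuinely gapped $H$, not $H_\ell$) is invertible throughout $[0,2\kappa_\star)$ and carries \emph{zero} spectral flow there, so your procedure would output $0$.

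The paper fixes this by running the interpolation in the opposite direction. It introduces a bounded infinite-volume localizer $L^\infty_{\kappa,\ell}$ (using a flattened position $f_\ell(X)$), shows via the squaring identity you wrote that $L^\infty_{\kappa,\ell}$ is invertible for all $\kappa\in[0,2\kappa_\star)$---this is where $\kappa_\star$ actually comes from, since one uses $\gap(S)=\gap(H)$, not $\gap(S_\ell)$---and identifies the invariant with $\Sf([0,1]\ni\kappa\mapsto L^\infty_{\kappa,\ell})=\Sf([\kappa_\star,1]\ni\kappa\mapsto L^\infty_{\kappa,\ell})$. The reference point is then $\kappa=1$, where $L_{1,\ell}$ is purely positional, manifestly invertible, with $\Sgntr(L_{1,\ell})=0$. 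On $[\kappa_\star,1]$ the position block has magnitude $\geq\kappa_\star\ell$ outside $\calB_\ell$, so the exterior piece $L_{\kappa,\ell^c}$ has gap $\geq\kappa_\star\ell-\|H\|$; a Schur-complement argument (not Combes--Thomas) then shows that turning off the box--complement coupling keeps everything invertible once $\ell$ satisfies \eqref{eq:constraint_on_ell}, yielding $\Sf([\kappa_\star,1]\ni\kappa\mapsto L^\infty_{\kappa,\ell})=\Sf([\kappa_\star,1]\ni\kappa\mapsto L_{\kappa,\ell})=\tfrac12\Sgntr(L_{\kappa_\star,\ell})$. Your Combes--Thomas route for the invertibility of $L_{\kappa_\star,\ell}$ could be made to work, but it is not needed, and it does not by itself produce the topological identification.
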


We recall that for any invertible Hermitian matrix $M=M^\ast$, $\Sgntr(M)$ is defined as the number of positive minus the number of negative eigenvalues.

\subsection{Discussion and possible extensions}
The main contribution of this work is to review the proof of the Loring--Schulz-Baldes correspondence between bulk topological indices and the signature of the spectral localizer, in the chiral one-dimensional and integer quantum Hall settings, using only elementary functional analysis and basic properties of the spectral flow. In particular, the proof does not rely on $K$-theory or on abstract tools employed in \cite{LSB1,LSB2,LSB3}. Instead, the argument is based on three ingredients: (i) locality estimates for commutators with (functions of) the position operator, (ii) quantitative lower bounds on spectral gaps of block operators, and (iii) the stability and homotopy invariance of the spectral flow. This makes the spectral localizer construction more accessible to readers with a standard analysis background, while at the same time yielding explicit bounds on the admissible parameters $\kappa_\star$ and $\ell$ in \Cref{thm:main} in terms of the locality constants and the bulk gap.

A second conceptual point is the use of a bounded ``flattened'' position operator $f_\ell(X)$ in place of the unbounded position operator employed in the original localizer of \cite{Lor15,LSB1}. On the infinite-volume level, the resulting localizer $L_{\kappa,\ell}^\infty$ is a bounded perturbation of the chiral block operator or of the doubled Hamiltonian, which is technically convenient when invoking spectral-flow arguments and when passing to finite-volume truncations. The price to pay is a slightly more elaborate truncation scheme, but the boundedness of $f_\ell(X)$ allows us to keep uniform control of commutators and gaps as $\ell\to\infty$, which is crucial for obtaining quantitative estimates.

From the point of view of applications, our results justify the use of the spectral localizer as a robust finite-volume numerical diagnostic for one-dimensional chiral phases and two-dimensional Chern insulators. The fact that the signature of $L_{\kappa_\star,\ell}$ reproduces the bulk invariant for all sufficiently large boxes shows that no extrapolation to infinite volume is required, provided one works in the regime prescribed by \Cref{thm:main}. While our bounds on $\ell$ are probably far from optimal, they make the dependence on the gap and locality parameters completely explicit and may serve as a starting point for further optimization tailored to concrete models.

\medskip

Let us now comment on two natural extensions that are not treated in the present work.

\smallskip

\noindent\emph{The mobility gap regime.}  As recalled above, in the absence of a spectral gap but under a mobility gap (dynamical localization) assumption one can still define the bulk index in terms of the polar part $U=\polar(S)$ of the Fermi unitary, both in the chiral $d=1$ case and in the IQHE case \cite{EGS_2005,Shapiro20,BSS23}. In this regime the operator $\Lambda S \Lambda+\Lambda^\perp$ need not be Fredholm, so one is forced to work with $\Lambda U \Lambda+\Lambda^\perp$ instead. Our proof of \Cref{thm:main} already passes through the index formula for $U$ via \Cref{prop:main-index}, and the locality assumptions on $H$ imply suitable Schatten-class bounds on commutators $[U,f_\ell(X)]$ (compare \Cref{lem:commutator bounds with fl,lem:commutator bounds in two dimensions}). It is therefore plausible that the argument can be extended to the mobility-gap setting, replacing the spectral gap by dynamical localization estimates in the sense of \cite{EGS_2005,Shapiro20}. The main task would be to make sense of the notion of a spectral flow in the presence of dense pure point spectrum of Anderson localized states within the mobility gap. See also \cite{Stoiber2025} for a somewhat different approach.

\smallskip

\noindent\emph{Alternative boundary conditions and truncations.}  Throughout we work with Dirichlet-type truncations implemented by the partial isometries $J_\ell$, which amount to cutting the system to the box $\calB_\ell$ and extending wave-functions by zero outside. This choice is convenient because it decouples the interior and the exterior up to finite-rank errors, so that the infinite-volume localizer splits into a finite-volume part and a uniformly gapped complement. From a numerical perspective, however, periodic or twisted boundary conditions, as well as smoother spatial cut-offs, are often preferable. Our proof suggests that the specific choice of truncation is not essential: the key property is that the finite-volume localizer can be realized as a finite-rank perturbation of a uniformly gapped operator acting on the complement, with commutator bounds that are uniform in $\ell$. It should therefore be possible to adapt the argument to periodic boxes (with flux twists), to more general shapes, or to window functions with soft edges, at the cost of a slightly more involved bookkeeping of the off-diagonal terms. We do not attempt to develop such generalizations here.

\smallskip

\noindent\emph{Further directions.}  Beyond these two immediate extensions, several other directions appear promising. One is to treat higher spatial dimensions and symmetry classes beyond the complex chiral and unitary classes considered here, in particular real symmetry classes with $\ZZ_2$ indices, where spectral localizers for $\ZZ_2$ invariants have already been constructed in \cite{PSB_2016,LSB3}. Another is to adapt the present approach to Floquet systems, in which the fundamental object is a unitary time-evolution operator rather than a static Hamiltonian, and where bulk indices are again naturally expressed in terms of Fredholm indices of half-space compressions. Finally, it would be interesting to investigate whether the bulk-edge correspondence interpretation developed in \Cref{sec:bulk-edge correspondence perspective} can make sense in higher dimensions, in particular two dimensions.

\medskip

This paper is organized as follows. In \Cref{sec:proof} we prove
\Cref{thm:main} by first giving an outline of the proof in three key
lemmas and then proving those lemmas in subsections of
\Cref{sec:proof}.  The proof is performed in $d=1$ and $d=2$ cases on
an equal footing.  In \Cref{sec:bulk-edge correspondence perspective}
we revisit the construction in $d=1$ from the viewpoint of the
bulk-edge correspondence and interpret the spectral flow of the
localizer as an edge index of an associated domain-wall system.
Finally, several technical commutator estimates, gap bounds and a
short review of the spectral flow are collected in the appendices.

\section{The proof of \texorpdfstring{\Cref{thm:main}}{the main
    theorem}}
\label{sec:proof}

Let us consider $x\in\ZZ^2$ as a complex variable. Then we define the
map $f_\ell:\ZZ^d\to\CC$, for any $\ell\in\NN$, $d=1,2$:
\begin{equation}
  \label{eq:fell_def}
  \ZZ^d
  \ni x \mapsto \begin{cases}
    1 & x =0 \,, \\
    x & 1\leq |x|< \ell\,,\\
    \ell \frac{x}{|x|} & |x|\geq \ell\,;
  \end{cases}
\end{equation}
see \Cref{fig:enter-label}. We then consider an intermediate object, the infinite-volume localizer operator, as 
\begin{equation}
  \label{eq:infinite-volume localizer}
  L_{\kappa,\ell}^\infty := \begin{cases}
    \br{1-\kappa}\begin{bmatrix}
      0 & S^\ast \\
      S & 0
    \end{bmatrix} +
    \kappa \begin{bmatrix}f_\ell(X) & 0 \\ 0 & -f_\ell(X)\end{bmatrix} & d=1 \\
    \br{1-\kappa}\begin{bmatrix}
      H & 0 \\ 0 & -H
    \end{bmatrix}+\kappa\begin{bmatrix}
      0 & f_\ell(X)^\ast \\ f_\ell(X) & 0
    \end{bmatrix} & d=2
  \end{cases}  
\end{equation}
on $\calH$ or $\calH\oplus\calH$ respectively.  Here $X$ is the
lattice position operator on $\ell^2(\ZZ^d)$.

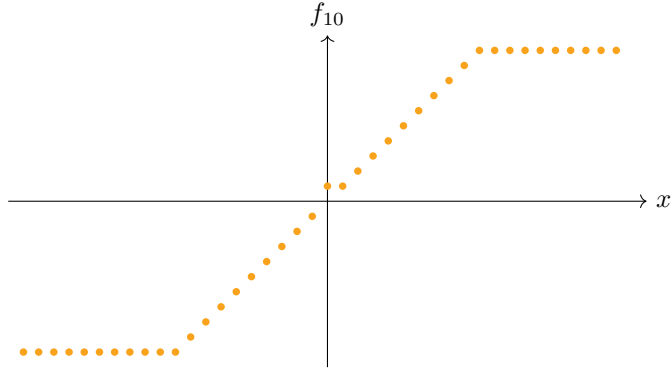
\begin{figure}
  \centering
  \begin{tikzpicture}[scale=0.2]
    \draw[->] (-21,0) -- (21,0) node[right] {\( x \)};
    \draw[->] (0,-11) -- (0,11) node[above] {\( f_{10} \)};
    
    \foreach \x/\y in {-20/-10,-19/-10,-18/-10,-17/-10,-16/-10,-15/-10,-14/-10,-13/-10,-12/-10,-11/-10, -10/-10, -9/-9, -8/-8, -7/-7, -6/-6, -5/-5, -4/-4, -3/-3, -2/-2, -1/-1, 0/1, 
      1/1, 2/2, 3/3, 4/4, 5/5, 6/6, 7/7, 8/8, 9/9, 10/10, 11/10, 12/10, 13/10, 14/10, 15/10, 16/10, 17/10, 18/10, 19/10} {
      \fill[YellowOrange] (\x,\y) circle (7pt);
    }
    
  \end{tikzpicture}
  \caption{A depiction of the function $f_{10}$ on $x\in[-20,20)\cap\ZZ$ in case $d=1$. Note $0\notin\im f_{10}$.}
  \label{fig:enter-label}
\end{figure}

We begin by connecting the topological invariant to the spectral flow in
\begin{lem}\label{lem:topological index is the spectral flow}
  For any $\ell\in\NN$ we have \eq{
    \operatorname{invariant}(H)
   &= \Sf([0,1]\ni\kappa\mapsto L_{\kappa,\ell}^\infty) } where \eq{
   \operatorname{invariant}(H) := \begin{cases}
      \Zak(H) & d=1\\
      \Chern(H) & d=2
    \end{cases}\,.
   }
\end{lem}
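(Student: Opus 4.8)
The strategy is to compute the spectral flow of $\kappa \mapsto L_{\kappa,\ell}^\infty$ by comparing the two endpoints and using the stability/homotopy invariance of the spectral flow. The map $L^\infty_{\cdot,\ell}$ is a norm-continuous path of bounded self-adjoint operators (once $f_\ell$ is bounded, which is why the flattened position is used), so its spectral flow through $0$ is well-defined provided the endpoints are invertible; I will first check invertibility at $\kappa=0$ and $\kappa=1$. At $\kappa=0$ the localizer is, in $d=1$, the chiral block $\begin{bmatrix} 0 & S^\ast \\ S & 0\end{bmatrix}$, which is invertible because $S$ is (the gap hypothesis), and in $d=2$ it is $\begin{bmatrix} H & 0 \\ 0 & -H\end{bmatrix}$, invertible because $H$ is gapped. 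At $\kappa=1$ the localizer is $\begin{bmatrix} f_\ell(X) & 0 \\ 0 & -f_\ell(X)\end{bmatrix}$ in $d=1$ and $\begin{bmatrix} 0 & f_\ell(X)^\ast \\ f_\ell(X) & 0\end{bmatrix}$ in $d=2$; both are invertible because $0 \notin \operatorname{im} f_\ell$ by construction (see Figure), so $f_\ell(X)$ is an invertible operator on $\ell^2(\ZZ^d)$.

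The core identification proceeds through the two standard facts about spectral flow relating it to a Fredholm index. For the endpoint at $\kappa=1$ I would diagonalize/flatten: the path $\kappa \in [0,1] \mapsto L^\infty_{\kappa,\ell}$ can, without changing its spectral flow, be deformed so that near $\kappa=1$ the position part is replaced by its phase $g_\ell(X) := f_\ell(X)/|f_\ell(X)|$ (a unitary), using that $|f_\ell(X)|$ is a positive invertible operator commuting with everything relevant and that conjugating/scaling by such operators does not change signs of eigenvalues crossing zero. Then $g_\ell(X)$ is a unitary homotopic — via the obvious radial homotopy on $\ZZ^d$ — to the phase of the full position operator, i.e.\ to $S^\flat$-type data in $d=1$ and to $L = \exp(\ii\arg(X_1+\ii X_2))$ in $d=2$. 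Thus the endpoint data matches exactly the unitaries appearing in the definitions \eqref{eq:Zak phase_U} and in $\Chern(H)$.

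The main step — and the main obstacle — is the general principle that for a path $A_\kappa = (1-\kappa) A_0 + \kappa A_1$ (or any homotopy) interpolating between an ``internal'' self-adjoint operator and a ``positional'' self-adjoint operator of the relevant block form, the spectral flow through $0$ equals the Fredholm index of a half-space compression built from the same data. Concretely, I expect to invoke (or cite from the appendix on spectral flow) a packaged statement of the form: if $P = \chi_{(-\infty,0)}(A_0)$ and $Q$ is the spectral projection associated to the ``positive position'' half-space, then $\Sf([0,1] \ni \kappa \mapsto (1-\kappa)A_0 + \kappa\,(\text{position part})) = \ind(Q P Q + Q^\perp)$ or its negative, with the sign fixed by orientation conventions. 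In $d=2$ this is exactly $-\ind(P L P + P^\perp) = -\Chern(H)$ up to sign (and one adjusts the localizer's sign convention or the definition of $\Sgntr$ accordingly); in $d=1$, using the off-diagonal chiral structure, the doubled block $\begin{bmatrix} 0 & S^\ast \\ S & 0 \end{bmatrix}$ vs.\ $\begin{bmatrix} f_\ell(X) & 0 \\ 0 & -f_\ell(X)\end{bmatrix}$ reduces (by the chiral grading) to the index of $\Lambda S^\flat \Lambda + \Lambda^\perp$, i.e.\ $\Zak(H)$. The delicate points I anticipate are: (i) getting the compactness of the relevant commutators $[\Lambda, S^\flat]$, $[P,L]$ (these follow from \eqref{eq:locality} as already recorded in the excerpt, and the analogous statements for $f_\ell(X)$ will need the commutator bounds of the appendix lemmas); (ii) checking that the endpoint invertibility persists along enough of the path, or rather that one does not need it along the whole path since spectral flow only requires invertible endpoints; and (iii) pinning down the global sign so that the factor $\tfrac12$ and the identification with $\Zak$ resp.\ $\Chern$ (not their negatives) come out correctly — this is bookkeeping but is where errors hide. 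Note that the $\ell$-independence of the answer is automatic: different $\ell$ give homotopic $f_\ell$'s (rescale the radius), hence the same spectral flow.
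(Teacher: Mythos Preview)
Your plan correctly identifies the architecture --- invertible endpoints, flattening to phases, and a spectral-flow-equals-index identification --- but the crucial middle step is a genuine gap. You write that you ``expect to invoke (or cite from the appendix on spectral flow) a packaged statement'' of the form $\Sf = \ind(QPQ+Q^\perp)$. No such statement appears in the spectral-flow appendix: that appendix records only the elementary properties (integrality, additivity, concatenation, homotopy invariance). The identification of the spectral flow with a Fredholm index of a half-space compression is precisely the content of the lemma, and the paper's stated purpose is to prove it by hand rather than appeal to a Phillips-type theorem. So your plan assumes exactly what has to be established.

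The paper's actual argument differs from yours in one essential way: it flattens \emph{both} operators, not just the position piece. Concretely, it replaces the pair $(A,B)$ (where $A$ is the self-adjoint block and $B$ the off-diagonal block in the abstract $2\times2$ form covering both $d=1$ and $d=2$) by their polar parts $(A^\flat,B^\flat)$ via a two-parameter homotopy, checking Fredholmness throughout using compactness of $[A,B]$. After flattening, both endpoints are self-adjoint \emph{unitaries} $Q_0,Q_1$, and the spectral flow of $(1-t)Q_0+tQ_1$ is computed by an explicit Fredholm-pair argument: eigenvalue curves can only cross zero at $t=\tfrac12$, and $\ker(Q_0+Q_1)$ decomposes as $\big(E_{-1}(Q_0)\cap E_1(Q_1)\big)\oplus\big(E_1(Q_0)\cap E_{-1}(Q_1)\big)$, whose dimensions one reads off directly as $\dim\ker(\Lambda B^\flat|_{\operatorname{Ran}\Lambda})$ and $\dim\ker(\Lambda B^{\flat*}|_{\operatorname{Ran}\Lambda})$. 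Your partial flattening (position only) does not produce self-adjoint unitaries at both ends, so this explicit kernel computation is unavailable and you are forced back onto the black box you do not have. The sign and $d=2$ bookkeeping you flag as delicate is also handled by this abstraction: both dimensions fit the same Proposition with the roles of $A$ and $B$ exchanged (and the path direction reversed), so no separate argument is needed.
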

Outside the spectral localizer literature, this result can be compared with \cite{Phi_cmb96,Phillips,BoossBavnbekLeschPhillips2005UnboundedFredholmSpectralFlow}. Below we give a self-contained simple proof; in \cref{sec:bulk-edge correspondence perspective} we present a physical bulk-edge correspondence perspective on it for $d=1$.

Next we establish that the spectral flow in \Cref{lem:topological
  index is the spectral flow} may be started at $\kappa=\kappa_\star$
instead of $\kappa=0$, as $L_{\kappa,\ell}^\infty$ is invertible for
small $\kappa$.

\begin{lem}\label{lem:localizer is invertible for sufficiently small
    kappa}
  Recall $\kappa_\star$ and $D$ from \Cref{thm:main}.  Then for any
  $\kappa \in [0, 2\kappa_\star)$,
  \begin{equation}
    \inf_{\ell\in\NN}
    \label{eq:estimate_on_gap}
    \gap(L_{\kappa,\ell}^\infty) > 0,
  \end{equation}
  and therefore \eq{ \Sf([0,1]\ni\kappa\mapsto
    L_{\kappa,\ell}^\infty) = \Sf([\kappa_\star,1]\ni\kappa\mapsto
    L_{\kappa,\ell}^\infty)\,.  }
\end{lem}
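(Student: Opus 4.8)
The statement has two parts: (i) a uniform (in $\ell$) lower bound on $\gap(L_{\kappa,\ell}^\infty)$ for all $\kappa\in[0,2\kappa_\star)$, and (ii) the conclusion that the spectral flow can be restarted at $\kappa_\star$. Part (ii) is immediate from part (i) once we have it: since $L_{\kappa,\ell}^\infty$ is norm-continuous in $\kappa$ (indeed affine in $\kappa$) and invertible for every $\kappa\in[0,2\kappa_\star)$, in particular on the whole closed subinterval $[0,\kappa_\star]$, the spectral flow along $[0,\kappa_\star]$ is zero, and additivity of the spectral flow under concatenation of paths gives $\Sf([0,1]\ni\kappa\mapsto L_{\kappa,\ell}^\infty)=\Sf([0,\kappa_\star])+\Sf([\kappa_\star,1])=\Sf([\kappa_\star,1]\ni\kappa\mapsto L_{\kappa,\ell}^\infty)$. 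I would state this explicitly and cite the appendix on the spectral flow for the concatenation property and the fact that a path through invertibles contributes nothing.

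The real content is part (i), the gap estimate, and here the plan is a direct computation of $(L_{\kappa,\ell}^\infty)^2$ exploiting the block structure. Consider $d=1$ first. Write $L_{\kappa,\ell}^\infty = (1-\kappa)\Gamma + \kappa\, \Lambda$ where $\Gamma=\begin{bmatrix}0&S^\ast\\S&0\end{bmatrix}$ is the chiral block operator and $\Lambda=\begin{bmatrix}f_\ell(X)&0\\0&-f_\ell(X)\end{bmatrix}$. Squaring,
\begin{equation*}
  (L_{\kappa,\ell}^\infty)^2 = (1-\kappa)^2\Gamma^2 + \kappa^2\Lambda^2 + \kappa(1-\kappa)\,(\Gamma\Lambda+\Lambda\Gamma).
\end{equation*}
Now $\Gamma^2=\begin{bmatrix}S^\ast S&0\\0&SS^\ast\end{bmatrix}\geq \gap(H)^2$ since $S$ is invertible with $\gap(H)=\norm{S^{-1}}^{-1}$; and $\Lambda^2 = f_\ell(X)^\ast f_\ell(X)\otimes\Id \geq 1$ because $|f_\ell(x)|\geq 1$ for all $x$ by construction \eqref{eq:fell_def}. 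The cross term $\Gamma\Lambda+\Lambda\Gamma$ is \emph{not} zero in general, but it equals $\begin{bmatrix}0 & S^\ast f_\ell(X) - f_\ell(X) S^\ast \\ f_\ell(X)S - S f_\ell(X) & 0\end{bmatrix}$ up to signs, i.e. it is built entirely out of the commutator $[f_\ell(X),S]$ (and its adjoint). This is exactly where the locality hypothesis enters: \eqref{eq:locality} bounds $[f_\ell(X),S]$ in operator norm, uniformly in $\ell$, by a constant controlled by $C$ and $\mu$; I expect this is the content of one of the promised commutator lemmas (\texttt{lem:commutator bounds with fl}), and the constant $D=C/(\cosh(\mu)-1)$ in the statement is precisely what comes out of summing the geometric series $\sum_{k\geq 1} k\,e^{-\mu k}$ against the Lipschitz-type bound $|f_\ell(x)-f_\ell(y)|\leq |x-y|$. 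So $\norm{\Gamma\Lambda+\Lambda\Gamma}\leq 2D$ (or $D$, up to the precise normalization in that lemma). Then for any unit vector $\psi$,
\begin{equation*}
  \langle\psi,(L_{\kappa,\ell}^\infty)^2\psi\rangle \geq (1-\kappa)^2\gap(H)^2 + \kappa^2 - 2\kappa(1-\kappa)D \geq (1-\kappa)^2\gap(H)^2 + \kappa^2 - 2\kappa D,
\end{equation*}
and one checks that the choice $\kappa_\star = \tfrac12\,\gap(H)^2/(\gap(H)^2+D)$ (the $d=1$ case of \eqref{eq:constraint on maximal kappa} with $D^d=D$) is exactly calibrated so that this quadratic-in-$\kappa$ expression stays bounded below by a positive constant on $[0,2\kappa_\star)$ — indeed it is designed so that the obstruction $2\kappa D$ is beaten by $(1-\kappa)^2\gap(H)^2+\kappa^2$ on that range; a short elementary estimate (completing the square, or bounding $(1-\kappa)^2\geq (1-2\kappa_\star)^2$) yields a concrete lower bound independent of $\ell$. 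Since the bound involves only $\gap(H)$, $D$ and $\kappa$, the infimum over $\ell$ is strictly positive.

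For $d=2$ the argument is structurally identical with $\Gamma=\begin{bmatrix}H&0\\0&-H\end{bmatrix}$ (so $\Gamma^2=H^2\otimes\Id\geq\gap(H)^2$) and $\Lambda=\begin{bmatrix}0&f_\ell(X)^\ast\\f_\ell(X)&0\end{bmatrix}$ (so $\Lambda^2=\mathrm{diag}(f_\ell(X)^\ast f_\ell(X),\,f_\ell(X)f_\ell(X)^\ast)\geq 1$), and now the cross term $\Gamma\Lambda+\Lambda\Gamma$ is off-diagonal and built from $[H,f_\ell(X)]$ and its adjoint, which is again bounded uniformly in $\ell$ by the two-dimensional commutator lemma (\texttt{lem:commutator bounds in two dimensions}); here the relevant constant is $D^2$ rather than $D$, explaining the $D^d$ in \eqref{eq:constraint on maximal kappa}, because in $d=2$ the geometric sum $\sum_{x\in\ZZ^2}\norm{x-y}\,e^{-\mu\norm{x-y}}$ (or rather the modulus-of-continuity version for the complex $f_\ell$) produces the square. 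The rest of the quadratic estimate goes through verbatim. The main obstacle — or rather the only place where genuine work beyond bookkeeping happens — is pinning down the commutator bound $\norm{[f_\ell(X),S]}\leq D$ (resp. $\norm{[f_\ell(X),H]}\leq D^2$) uniformly in $\ell$; but since these are precisely the estimates the paper announces as separate lemmas, in this proof I would simply invoke them and then do the elementary quadratic-form manipulation above. I would also double-check the endpoint convention: the interval is half-open $[0,2\kappa_\star)$, and at $\kappa=\kappa_\star$ itself (which is what we actually use) we get a clean strictly positive bound, so there is no boundary subtlety for the application to \Cref{lem:topological index is the spectral flow}.
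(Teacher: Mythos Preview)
Your proposal is correct and follows essentially the same route as the paper: square the localizer, use the block structure so that the diagonal is bounded below by $(1-\kappa)^2\gap(H)^2+\kappa^2$ and the off-diagonal is the commutator $[f_\ell(X),S]$ (resp.\ $[f_\ell(X),H]$), invoke the uniform-in-$\ell$ commutator bounds from the appendix, and then use concatenation of the spectral flow. The paper packages the squaring step as a general block-operator lemma (\Cref{lem:block_traceless}) rather than doing it inline, but the content is identical.

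One small cleanup: the anticommutator $\Gamma\Lambda+\Lambda\Gamma$ is an off-diagonal block matrix with entries $[S,f_\ell(X)]$ and its adjoint, so its operator norm is exactly $\norm{[S,f_\ell(X)]}\leq D$ --- no factor of $2$. With that, and simply discarding the favorable $\kappa^2$ term, you get the clean lower bound $(1-\kappa)^2\gap(H)^2-\kappa(1-\kappa)D^d$, which factors as $(1-\kappa)\big((1-\kappa)\gap(H)^2-\kappa D^d\big)$ and is manifestly positive precisely for $\kappa<\gap(H)^2/(\gap(H)^2+D^d)=2\kappa_\star$. This makes the ``exactly calibrated'' claim transparent and avoids the ad hoc weakening $2\kappa(1-\kappa)D\leq 2\kappa D$.
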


The exclusion of $[0,\kappa_\star]$ from the spectral flow will allow
us to decouple the finite box from
the outer region in \Cref{lem:flow is insensitive to decoupling finite
  box with outside}.

Recall the subspace $\calH_\ell \subset \calH$ based on the finite box
$\Bell$; let $\calH_{\ell^c}$ be its orthogonal complement, which is
the infinite Hilbert space based on the outside of the box $\Bell$.  Finally,
let $J_{\ell^c} : \calH_{\ell^c} \to \calH$ be the inclusion operator.
We now define the (Dirichlet) restrictions of the operator
$L_{\kappa,\ell}^\infty$ to $\Bell$ and its complement by
\begin{equation}
  \label{eq:decoupled}
  L_{\kappa,\ell} := J_\ell^* L_{\kappa,\ell}^\infty  J_\ell, \qquad
  L_{\kappa,\ell^c} := J_{\ell^c}^* L_{\kappa,\ell}^\infty  J_{\ell^c}.
\end{equation}
We note that $L_{\kappa,\ell}$ in \eqref{eq:decoupled} is the same as
in \eqref{eq:finite-volume localizer} by the choice of $f_\ell$. Moreover,
\eq{
L_{\kappa,\ell}^\infty = L_{\kappa,\ell} \oplus L_{\kappa,\ell^c} + \text{connection pieces}\,.
}

Hence, even though, it is true that the spectral flow of a direct sum is the sum of the spectral flows, one must justify why the connection pieces may be ignored. We do this in 
\begin{lem}\label{lem:flow is insensitive to decoupling finite box
    with outside}
  For $\kappa_\star$ as in
  \eqref{eq:constraint on maximal kappa} and $\ell$ obeying
  \eqref{eq:constraint_on_ell}, the connection pieces do not contribute to the spectral flow and $L_{\kappa,\ell^c}$ is always invertible. Hence,
  \begin{align}
    \label{eq:flow_decouples}
    \Sf([\kappa_\star,1]\ni\kappa\mapsto L_{\kappa,\ell}^\infty)
    &= \Sf([\kappa_\star,1]\ni\kappa\mapsto L_{\kappa,\ell})
      + \Sf([\kappa_\star,1]\ni\kappa\mapsto L_{\kappa,\ell^c}) \\
    &= \Sf([\kappa_\star,1]\ni\kappa\mapsto L_{\kappa,\ell}).
      \label{eq:only_finite_flow}
  \end{align}
\end{lem}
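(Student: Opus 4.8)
The plan is to run a two-parameter homotopy argument that continuously "switches off" the connection pieces while controlling the gap, so that the spectral flow is unaffected. First I would make precise what the connection pieces are: writing $P_\ell = J_\ell J_\ell^*$ for the orthogonal projection onto $\calH_\ell$ and $P_{\ell^c} = 1 - P_\ell$, one has $L_{\kappa,\ell}^\infty = L_{\kappa,\ell}\oplus L_{\kappa,\ell^c} + R_{\kappa,\ell}$ where $R_{\kappa,\ell} = P_\ell L_{\kappa,\ell}^\infty P_{\ell^c} + P_{\ell^c} L_{\kappa,\ell}^\infty P_\ell$ is the off-diagonal coupling. By the choice of $f_\ell$ (which is constant in magnitude $\ell$ on the exterior), the position part of $L_{\kappa,\ell}^\infty$ is block-diagonal with respect to $P_\ell$, so $R_{\kappa,\ell}$ comes only from the hopping terms of $H$ (resp.\ $S$) across the boundary of $\calB_\ell$; by the locality estimate \Cref{eq:locality} this is a \emph{finite-rank} operator with norm bounded by a constant depending only on $C,\mu$ and the surface area of $\calB_\ell$ — but crucially, the relevant bound for the argument is that its matrix elements connect sites at distance $\geq 1$ across the cut, and one can also exploit that it is localized near $\norm{x}=\ell$.

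The core step is then to show that the linear homotopy $s\mapsto L_{\kappa,\ell}\oplus L_{\kappa,\ell^c} + s R_{\kappa,\ell}$, $s\in[0,1]$, stays invertible for all $\kappa\in[\kappa_\star,1]$. For this I would combine two gap lower bounds established earlier: (i) on the exterior block $L_{\kappa,\ell^c}$, the flattened position operator has magnitude exactly $\ell$, and since $\kappa\geq\kappa_\star$ the term $\kappa f_\ell(X)$ dominates the bounded part $(1-\kappa)H$ (resp.\ $(1-\kappa)S$) once $\ell$ is large enough — this is precisely where \Cref{eq:constraint_on_ell} enters, giving $\gap(L_{\kappa,\ell^c})\geq c\,\kappa_\star\ell$ for an explicit $c$; (ii) on the interior block $L_{\kappa,\ell}$ one would invoke (or re-derive, via the same block-gap bound used for \Cref{lem:localizer is invertible for sufficiently small kappa}) a uniform lower bound $\gap(L_{\kappa,\ell})\geq \delta>0$ on $\kappa\in[\kappa_\star,1]$. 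Since the two blocks live on orthogonal subspaces, $\gap(L_{\kappa,\ell}\oplus L_{\kappa,\ell^c})\geq\min\{\delta,c\kappa_\star\ell\}$, and adding $sR_{\kappa,\ell}$ changes the spectrum by at most $\norm{R_{\kappa,\ell}}\leq C'$ (a bound depending on $C,\mu$ but \emph{not} on $\ell$, since the boundary hopping norm is controlled by $\sum_{\norm{x-y}\geq 1}C\ee^{-\mu\norm{x-y}}$-type geometric sums per boundary site, hence $\leq D$); requiring $\ell$ large enough that $c\kappa_\star\ell>C'+\delta$ keeps the whole family invertible. Homotopy invariance of the spectral flow (Appendix) then gives $\Sf([\kappa_\star,1]\ni\kappa\mapsto L_{\kappa,\ell}^\infty)=\Sf([\kappa_\star,1]\ni\kappa\mapsto L_{\kappa,\ell}\oplus L_{\kappa,\ell^c})$; additivity of spectral flow under direct sums splits this into the two summands \Cref{eq:flow_decouples}; and since $\kappa\mapsto L_{\kappa,\ell^c}$ is a norm-continuous path of invertibles, its spectral flow is $0$, yielding \Cref{eq:only_finite_flow}.

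The main obstacle I anticipate is bookkeeping the norm of $R_{\kappa,\ell}$ \emph{uniformly in $\ell$} and checking it is genuinely dominated by the exterior gap: one must verify that the connection term, although finite-rank of rank growing like the surface area $\ell^{d-1}$, has operator norm bounded independently of $\ell$ (this works because each boundary bond contributes a term of size $\lesssim \ee^{-\mu}$ and an operator-norm, as opposed to trace-norm, estimate of a "hopping across a hyperplane" operator is $\ell$-independent — essentially the same computation as in the commutator bounds of \Cref{lem:commutator bounds with fl,lem:commutator bounds in two dimensions}). Getting the constants to line up with the explicit threshold \Cref{eq:constraint_on_ell} — i.e.\ tracking where the factors $2$ and $1+8\norm{H}/\gap(H)$ come from — is the one place real care is needed; everything else (finite-rankness of $R_{\kappa,\ell}$, block-diagonality of the position part, homotopy and additivity of $\Sf$) is routine given the earlier lemmas.
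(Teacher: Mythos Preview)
Your overall architecture (two-parameter homotopy in $(\kappa,s)$, additivity under direct sums, vanishing flow of the invertible exterior block) matches the paper, but step~(ii) of your core argument is wrong and cannot be repaired as stated. You propose to show that $s\mapsto L_{\kappa,\ell}\oplus L_{\kappa,\ell^c}+sR_{\kappa,\ell}$ is \emph{invertible} for all $(\kappa,s)\in[\kappa_\star,1]\times[0,1]$, and for this you want a uniform lower bound $\gap(L_{\kappa,\ell})\geq\delta>0$ on $\kappa\in[\kappa_\star,1]$. But this is exactly what fails: the finite-volume localizer $L_{\kappa,\ell}$ must have eigenvalue crossings through zero on $[\kappa_\star,1]$ whenever the topological index is nonzero --- that is the content of the theorem. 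If your claimed uniform gap held, both spectral flows would be zero.

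What the homotopy argument actually requires is only \emph{Fredholmness} of $M_{\kappa,s}$ on the whole square together with \emph{invertibility} along the two vertical edges $\kappa=\kappa_\star$ and $\kappa=1$. Fredholmness is automatic (the only non-finite-rank block is $L_{\kappa,\ell^c}$, which is invertible by your estimate~(i)). The edge $\kappa=1$ is trivial since $R_{1,\ell}=0$. The real work is the edge $\kappa=\kappa_\star$, and here the paper proceeds in the opposite direction from you: instead of trying to bound $\gap(L_{\kappa_\star,\ell})$ directly (which is hard, since the truncated $H_\ell$ may have edge states closing the bulk gap), it inputs the gap of the \emph{full} operator $L_{\kappa_\star,\ell}^\infty$ (known from \Cref{lem:localizer is invertible for sufficiently small kappa}) together with the large exterior gap $\gap(L_{\kappa_\star,\ell^c})\gtrsim\kappa_\star\ell$, and uses a Schur-complement estimate (\Cref{lem:invertible_cut}) to \emph{deduce} invertibility of $M_{\kappa_\star,s}$ for all $s\in[0,1]$. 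The explicit threshold \eqref{eq:constraint_on_ell} is precisely the hypothesis of that Schur lemma with $B$ the connection piece (bounded crudely by $\|H\|$), $D$ the exterior block, and $H_1=L_{\kappa_\star,\ell}^\infty$.
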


Combining the previous three Lemmas, we immediately get that under the
assumptions of \Cref{thm:main},
\eql{ \Sf([\kappa_\star,1]\ni\kappa\mapsto
  L_{\kappa,\ell}) = \begin{cases}
    \Zak(H), & d=1,\\
    \Chern(H), & d=2.
  \end{cases}}
But the spectral flow of a finite Hermitian matrix
$L_{\kappa,\ell}$ equals half the difference of the signatures
between starting and ending points. At the ending point,
$\kappa=1$, the matrix has zero signature by construction since its eigenvalues
are exactly \eq{\Set{\pm \abs{f_\ell(x)} | 1\leq
    \abs{x}\leq\ell}\,.} The proof of \Cref{thm:main} is now
complete.

The rest of this section is devoted to the proof of
\Cref{lem:topological index is the spectral flow,lem:localizer is
  invertible for sufficiently small kappa,lem:flow is insensitive to
  decoupling finite box with outside}.

\subsection{Fredholm index via spectral
  flow: Proof of \Cref{lem:topological index is the spectral flow}}

We note that both $H$ and $H\oplus(-H)$ are invertible by hypothesis,
and by construction, so is $f_\ell(X)$. Moreover, $[H,f_\ell(X)]$ and
$[S,f_\ell(X)]$ are compact. Indeed, this is a consequence of the
exponential locality assumed on $H$, \Cref{eq:locality}, and
\Cref{lem:commutator bounds with fl} in one-dimension or
\Cref{lem:commutator bounds in two dimensions} in two-dimensions.

\Cref{lem:topological index is the spectral flow} now follows once we
establish the following abstract claim.

\begin{proposition}
  \label{prop:main-index}
  Let
  \begin{equation}
    \label{eq:H0-def}
    H_0 := \begin{bmatrix} 0 & B^* \\ B & 0 \end{bmatrix},
    \qquad
    H_1 := \begin{bmatrix} A & 0 \\ 0 & -A \end{bmatrix},
  \end{equation}
  with $B$ bounded invertible, $A$ bounded invertible self-adjoint, and
  $[A,B]$ compact.  Then
  \begin{equation}
    \label{eq:Sf-main}
    \Sf\big([0,1]\ni (1-t)H_0 + tH_1\big) \;=\; \ind\big(\Lambda B \Lambda + \Lambda^\perp\big),
  \end{equation}
  where $\Lambda:=\chi_{[0,\infty)}(A)$ is the spectral projection of $A$ corresponding to $[0,\infty)$.
\end{proposition}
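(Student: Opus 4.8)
The plan is to interpolate between the two standard facts that (i) the spectral flow counts how eigenvalues cross zero along the path, and (ii) the Fredholm index $\ind(\Lambda B\Lambda + \Lambda^\perp)$ measures the obstruction to $\Lambda B$ being invertible as a map $\Ran\Lambda \to \Ran\Lambda$. The bridge between them is the observation that $(1-t)H_0 + tH_1$ is, for each $t$, a self-adjoint operator that is invertible for $t$ near $0$ and near $1$, so the spectral flow along $[0,1]$ is well defined by homotopy invariance and the compactness of $[A,B]$ (which guarantees $(1-t)H_0+tH_1$ is a bounded-invertible-plus-compact perturbation of, say, $H_0$, hence has only isolated eigenvalues of finite multiplicity crossing zero). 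Concretely: $H_0$ is invertible since $B$ is; $H_1$ is invertible since $A$ is; and at an interior $t$ the kernel of $(1-t)H_0+tH_1$ is finite-dimensional because the operator differs from the invertible $H_0$ (or $H_1$) by the compact operator $t(H_1-H_0)$ modulo bounded invertibles — more precisely, $\chi_{\{0\}}$ of it has finite rank.

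First I would reduce to a normal-form computation. Using $\Lambda = \chi_{[0,\infty)}(A)$ and $\Lambda^\perp = \chi_{(-\infty,0)}(A)$, decompose the first Hilbert-space factor as $\Ran\Lambda \oplus \Ran\Lambda^\perp$ and conjugate the path by a fixed unitary (independent of $t$) that swaps appropriate blocks, so that the path becomes unitarily equivalent to one of the form $\begin{bmatrix} * & * \\ * & * \end{bmatrix}$ in which the $t=1$ endpoint is diagonal with the sign pattern dictated by $\Lambda$. The key algebraic identity to extract is that, up to a $t$-independent unitary, $(1-t)H_0 + tH_1$ is isospectral (hence has the same spectral flow) to a path whose crossing structure at $t$ near $1$ is governed by the compression $\Lambda B \Lambda$. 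Here I would lean on the compactness of $[A,B]$ to replace $A$ by $|A|$-type flattenings and $B$ by its polar part without changing the spectral flow (this is exactly the homotopy $\Lambda B\Lambda + \Lambda^\perp \simeq \Lambda U \Lambda + \Lambda^\perp$, $U = \polar(B)$, that already appears in the introduction, \Cref{eq:Zak phase_S,eq:Zak phase_U}), which is legitimate because these are norm-continuous homotopies through Fredholm operators and the corresponding operator path stays gapped at the endpoints.

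The cleanest route to actually evaluating the spectral flow is then the following. After the flattening homotopies, one may assume $A = \Lambda - \Lambda^\perp$ (a symmetry) and $B = U$ unitary with $[\,\Lambda-\Lambda^\perp,\,U\,]$ compact, equivalently $[\Lambda, U]$ compact. In this model the path is $(1-t)\begin{bmatrix} 0 & U^* \\ U & 0\end{bmatrix} + t\begin{bmatrix}\Lambda - \Lambda^\perp & 0 \\ 0 & -(\Lambda-\Lambda^\perp)\end{bmatrix}$, and one can diagonalize it blockwise against the joint decomposition coming from $\Lambda$: on the "diagonal" sectors (where $\Lambda U \Lambda$ is already essentially invertible) no eigenvalue crosses zero, while the net crossing is localized to the finite-dimensional near-kernel of $\Lambda U \Lambda$ and is computed by a $2\times 2$ reduction $\begin{bmatrix} t & (1-t)\sigma \\ (1-t)\bar\sigma & -t\end{bmatrix}$-type analysis whose eigenvalues are $\pm\sqrt{t^2 + (1-t)^2|\sigma|^2}$ — never zero unless $\sigma = 0$. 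Tracking the orientation of each such crossing against the $\ker$ and $\coker$ of $\Lambda U \Lambda + \Lambda^\perp$ yields exactly $\dim\ker - \dim\coker = \ind(\Lambda U \Lambda + \Lambda^\perp) = \ind(\Lambda B \Lambda + \Lambda^\perp)$, using stability of the index under the flattening homotopies.

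The main obstacle I anticipate is the bookkeeping in the blockwise reduction: making rigorous the claim that only the near-kernel of $\Lambda B\Lambda$ contributes and that each contributing eigenvalue crosses zero exactly once with a definite sign. This requires either (a) a careful perturbative argument near $t=1$ controlling eigenvalue branches of a compact perturbation of an invertible self-adjoint operator, or (b) reducing cleanly to the abstract Phillips-type statement relating spectral flow of a straight-line path of self-adjoints to a Fredholm index of an associated Toeplitz-like operator, and then citing it. Given the paper's stated aim of an elementary self-contained treatment, I would pursue (a), isolating the finite-dimensional crossing subspace via a spectral projection $\chi_{(-\delta,\delta)}$ of the path for small $\delta$ and a $t$ slightly less than $1$, showing this projection has constant finite rank on a neighborhood of $t=1$, and computing the signature of the path restricted to its range at the two ends of that neighborhood; the difference is $2\ind(\Lambda B\Lambda+\Lambda^\perp)$, while the remainder of $[0,1]$ contributes nothing since the path is invertible there after the flattening (this last point is where \Cref{lem:localizer is invertible for sufficiently small kappa} in spirit, and the gap estimates of the appendices, do the work in the concrete setting).
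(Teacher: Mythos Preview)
Your broad strategy---flatten $A \to 2\Lambda - \Id$ and $B \to U := \polar(B)$ by a homotopy that preserves the spectral flow, then analyze the straight-line path between the resulting self-adjoint unitaries $Q_0,Q_1$---is exactly the paper's. But two steps, as written, do not work.

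First, a minor point: $(1-t)H_0+tH_1$ is \emph{not} invertible-plus-compact, since $H_1-H_0$ is not compact. The paper obtains the Fredholm property by squaring: $\big((1-t)H_0+tH_1\big)^2 = (1-t)^2 H_0^2 + t^2 H_1^2 + t(1-t)\{H_0,H_1\}$, and the anticommutator has off-diagonal blocks $[A,B]$ and $[A,B]^*$, compact by hypothesis. The same device handles the two-parameter flattening homotopy.

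Second, and this is a genuine gap, your $2\times 2$ reduction gives the wrong answer. The eigenvalues $\pm\sqrt{t^2+(1-t)^2|\sigma|^2}$ of $\left[\begin{smallmatrix} t & (1-t)\sigma \\ (1-t)\bar\sigma & -t\end{smallmatrix}\right]$ never vanish on $(0,1]$, so by your own formula the spectral flow would be zero. The error is the choice of invariant $2$-plane: you have paired $(x,0)$ with $(0,x)$, but this span is not invariant for $Q_0$ unless $x$ is an eigenvector of $U$. After flattening, the genuine invariant $2$-planes for $(1-t)Q_0+tQ_1$ are spanned by $(x,0)$ and $(0,Ux)$ with $x\in\Ran\Lambda$ and $Ux\in\Ran\Lambda^\perp$ (i.e.\ $x\in\ker(\Lambda U|_{\Ran\Lambda})$). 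On such a plane \emph{both} diagonal entries of $Q_1$ equal $+1$, so the path acts as $\left[\begin{smallmatrix} t & 1-t \\ 1-t & t\end{smallmatrix}\right]$ with eigenvalues $1$ and $2t-1$, crossing upward at $t=\tfrac12$. The cokernel direction ($x\in\Ran\Lambda^\perp$, $Ux\in\Ran\Lambda$) produces $\left[\begin{smallmatrix} -t & 1-t \\ 1-t & -t\end{smallmatrix}\right]$, crossing downward at $t=\tfrac12$. The paper packages this as: $t=\tfrac12$ is the \emph{only} crossing time (for $t\neq\tfrac12$ the path is a norm-$<1$ perturbation of the unitary $Q_0$ or $Q_1$, hence invertible), and $\ker(Q_0+Q_1)=\big(E_{-1}(Q_0)\cap E_1(Q_1)\big)\oplus\big(E_1(Q_0)\cap E_{-1}(Q_1)\big)$, whose two summands are then identified with $\ker(\Lambda U|_{\Ran\Lambda})$ and $\ker(\Lambda U^*|_{\Ran\Lambda})$. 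Your perturbative analysis ``near $t=1$'' is aimed at the wrong place: nothing crosses there.
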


The rest of this subsection is devoted to the proof of
\Cref{prop:main-index}. We do so in steps.

\begin{lem}[Spectral flow may be flattened]
  \label{lem:flatten_the_flow}
  Denoting by $T^\flat:=\polar(T)$ the polar part of an operator, i.e., $T^\flat\sqrt{T^\ast T}\equiv T$, we define the self-adjoint unitary operators
  \begin{equation}
    \label{eq:ourQ0Q1}
    Q_0 := \begin{bmatrix} 0 & B^{\flat*} \\ B^\flat & 0 \end{bmatrix},
    \qquad
    Q_1 := \begin{bmatrix} A^\flat & 0 \\ 0 & -A^\flat \end{bmatrix}.
  \end{equation}
  Then
  \begin{equation}
    \label{eq:step1}
    \Sf\big([0,1]\ni (1-t)H_0 + tH_1\big) \;=\;
    \Sf\big([0,1]\ni (1-t)Q_0 + tQ_1\big).
  \end{equation}
\end{lem}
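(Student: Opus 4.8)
The plan is to build a norm-continuous path inside each of the two homotopies $t\mapsto (1-t)H_0+tH_1$ and $t\mapsto(1-t)Q_0+tQ_1$ — or rather a two-parameter homotopy connecting them — that never crosses $0$ in its spectrum, and then invoke homotopy invariance of the spectral flow (reviewed in the appendices). Concretely, for a bounded invertible operator $T$ write $T=T^\flat|T|$ with $|T|=\sqrt{T^\ast T}\geq \gap(T)>0$, and interpolate $|T|$ to $\Id$ along $s\mapsto |T|^{1-s}$, equivalently along the straight line $s\mapsto (1-s)|T|+s\,\Id$, which stays bounded below by $\min\{\gap(T),1\}>0$ and hence keeps $T_s:=T^\flat\big((1-s)|T|+s\,\Id\big)$ invertible with a uniform gap. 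Applying this simultaneously to $B$ (flattening $H_0$ to $Q_0$) and to $A$ (flattening $H_1$ to $Q_1$), one gets for each $s\in[0,1]$ a path
\begin{equation*}
  G_s(t):=(1-t)\begin{bmatrix} 0 & B_s^\ast \\ B_s & 0\end{bmatrix}
  + t\begin{bmatrix} A_s & 0 \\ 0 & -A_s\end{bmatrix},
  \qquad A_s:=A^\flat\big((1-s)|A|+s\,\Id\big),\quad B_s:=B^\flat\big((1-s)|B|+s\,\Id\big),
\end{equation*}
with $G_0=\big((1-t)H_0+tH_1\big)_t$ and $G_1=\big((1-t)Q_0+tQ_1\big)_t$, and endpoints $G_s(0)$, $G_s(1)$ that are invertible uniformly in $s$.

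The key point is then that $(s,t)\mapsto G_s(t)$ is a norm-continuous two-parameter family of bounded self-adjoint operators whose restrictions to the boundary segments $\{t=0\}$ and $\{t=1\}$ are paths of invertibles; by the homotopy invariance of the spectral flow relative to fixed invertible endpoints, $\Sf(G_0)=\Sf(G_1)$, which is exactly \Cref{eq:step1}. To make this rigorous I would check (i) that $s\mapsto A_s$ and $s\mapsto B_s$ are norm-continuous — this is clear since $A^\flat,B^\flat$ are fixed bounded operators and $s\mapsto (1-s)|A|+s\Id$ is affine in $s$ — so $(s,t)\mapsto G_s(t)$ is jointly norm-continuous; and (ii) that the endpoints stay invertible: $G_s(0)$ has square $\begin{bmatrix} B_s^\ast B_s & 0 \\ 0 & B_s B_s^\ast\end{bmatrix}$, and $B_s^\ast B_s = \big((1-s)|B|+s\Id\big)^2\geq \min\{\gap(B),1\}^2>0$ since $B^\flat$ is unitary (as $B$ is bounded invertible), and similarly $G_s(1)=\pm A_s$ with $|A_s|=(1-s)|A|+s\Id\geq\min\{\gap(A),1\}>0$. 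No compactness of $[A,B]$ is needed for this lemma — that hypothesis is only there to make the spectral flow well-defined/finite and will be used later; here it is enough that all operators in sight are bounded self-adjoint with the relevant essential-spectrum condition preserved, which it trivially is since flattening is a bounded homotopy.

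The main obstacle, such as it is, is purely bookkeeping about which notion of spectral flow is in force: for paths of bounded self-adjoint operators that are not of the form "compact perturbation of a fixed symmetry" one must make sure the version of $\Sf$ used (and its homotopy-invariance statement) applies to the whole square $\{(s,t)\}$, not just to the two vertical edges. Since $[A,B]$ compact is assumed throughout and flattening does not destroy it (indeed $[A_s,B_s]$ is a sum of terms each containing a compact factor, using $[A^\flat,B^\flat]$ compact — which follows from $[A,B]$ compact together with invertibility, cf. the discussion around \Cref{eq:Zak phase_U}), the essential-spectrum picture is uniform and the appendix's homotopy-invariance lemma applies verbatim. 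I would therefore state the argument as: flatten $|B|\rightsquigarrow\Id$ first (a homotopy of paths with invertible endpoints, via $G_s$ above with $A$ held fixed, $A_s\equiv A$), then flatten $|A|\rightsquigarrow\Id$ (same with $B^\flat$ now in place, $B_s\equiv B^\flat$), composing the two equalities of spectral flows to obtain \Cref{eq:step1}; splitting it in two keeps each homotopy manifestly a "bounded perturbation" deformation and avoids any subtlety about the joint continuity, which in any case holds.
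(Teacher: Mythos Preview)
Your proposal is correct and is essentially identical to the paper's proof: your two-parameter family $G_s(t)$ coincides with the paper's $H_{t,s}$ (the paper writes $A_s=(1-s)A+sA^\flat$ and $B_s=(1-s)B+sB^\flat$, which are your $A_s,B_s$), and the paper likewise verifies invertibility on the $t=0,1$ edges and Fredholmness on the whole square via compactness of $[A_s,B_s]$ before invoking homotopy invariance. The one point to clean up is your aside that ``no compactness of $[A,B]$ is needed for this lemma'': it is needed, precisely to guarantee that $G_s(t)$ is Fredholm for interior $(s,t)$ so that the homotopy-invariance statement applies --- you catch this yourself in the last paragraph, but the earlier sentence should be dropped.
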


\begin{proof}
  To establish \eqref{eq:step1}, we define the family
  \begin{equation}
    \label{eq:Hst-def}
    H_{t,s}
    :=
    (1-s)\big((1-t)H_0 + tH_1\big)
    +
    s\big((1-t)Q_0+ tQ_1\big),
    \qquad s,t\in[0,1].
  \end{equation}
  We observe that (with some details postponed)
  \begin{itemize}
  \item $H_{t,s}$ is Fredholm in all of $[0,1]^2$, therefore we can
    use homotopy property to replace the flow along the bottom edge
    (from $(t,s)=(0,0)$ to $(1,0)$), by the three-part path
    $(0,0)\to(0,1)\to(1,1)\to(1,0)$, see the right side of \Cref{fig:specQt}.
  \item $H_{1,s}$ is invertible for all $s\in[0,1]$ and therefore the
    spectral flow along the $(1,1)\to(1,0)$ part is zero.
  \item $H_{0,s}$ is invertible for all $s\in[0,1]$ and 
    therefore the spectral flow along the $(0,0)\to(0,1)$ part is zero.  
  \item Therefore the flow $(0,0)\to(1,0)$ is equal to the flow
    $(0,1)\to(1,1)$ which is precisely \eqref{eq:step1}.
  \end{itemize}
  
  We now provide the missing details.  The invertibility of $H_{1,s}$ follows from
  \begin{equation}
    \label{eq:H1s}
    H_{1,s} = A_s \oplus (-A_s),
    \qquad
    A_s = (1-s)A + sA^\flat 
    = A^\flat\big((1-s)|A| + s I\big),
  \end{equation}
  and $A_s$ being  obviously invertible for all $s\in[0,1]$.
  The invertibility of $H_{0,s}$ follows from
  \begin{equation}
    \label{eq:H0s}
    H_{0,s}^2 = B_s^*B_s \oplus B_sB_s^*,
    \qquad
    B_s = (1-s)B + s B^\flat 
    = B^\flat\big((1-s)|B| + s I\big),
  \end{equation}
  where $B_s$ is invertible.
  The Fredholm property of $H_{t,s}$ follows from the direct expansion
  \begin{equation}
    \label{eq:Hst-square}
    H_{t,s}^2
    = \big((1-t) H_{0,s} + t H_{1,s}\big)^2
    = (1-t)^2 H_{0,s}^2 + t^2 H_{1,s}^2
    + t(1-t)\big(H_{0,s}H_{1,s}+H_{1,s}H_{0,s}\big).
  \end{equation}
  The first two terms are invertible and therefore Fredholm.  The last
  term, using the notation $A_s$ and $B_s$ from \eqref{eq:H1s} and
  \eqref{eq:H0s}, is
  \begin{equation}
    \label{eq:Hanti}
    H_{0,s}H_{1,s}+H_{1,s}H_{0,s}
    = \begin{bmatrix}
      0 & [B_s,A_s]^* \\ [B_s,A_s] & 0
    \end{bmatrix},
  \end{equation}
  and therefore compact by our assumption that $[B,A]$ is compact and
  therefore  $[B,A^\flat]$, $[B^\flat,A]$ and $[B^\flat,A^\flat]$ are
  also compact.  Indeed, since $A,B$ are invertible, we may rewrite their polar part via the continuous functional calculus. However, the C-star algebra of operators with a compact commutator with a fixed operator is closed under the continuous functional calculus.
\end{proof}

We now want to evaluate the spectral flow in the right-hand side of
\eqref{eq:step1}.  We remark that $A^\flat \equiv 2\Lambda-\Id$.  Since $A$
is self-adjoint and the polar part of an operator is unitary by
definition, both $Q_0$ and $Q_1$ are self-adjoint unitaries.
Self-adjoint unitarity of $Q_0$ implies that $Q_0^2=\Id$ and that there
are only two eigenspaces, which correspond to the eigenvalues $\pm1$
and which we denote $E_{\pm1}(Q_0)$.

\begin{lem}[An index theorem for the flattened spectral flow]
  \label{lem:spectral-flow-Q}
  Let $Q_0$ and $Q_1$ be two self-adjoint unitary operators such that
  \begin{equation}
    \label{eq:fredholm-condition}
    0 \notin \sigma_{\mathrm{ess}}(Q_0 + Q_1).
  \end{equation}
  Then
  \begin{equation}
    \label{eq:Sf-equals-index-P}
    \Sf\big([0,1] \ni t \mapsto (1-t)\,Q_0 + t\,Q_1 \big)
    \;=\;
    \dim\big(E_{-1}(Q_0)\cap E_{1}(Q_1)\big) 
    \;-\;
    \dim\big(E_{1}(Q_0)\cap E_{-1}(Q_1)\big).
  \end{equation}
\end{lem}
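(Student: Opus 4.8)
The plan is to reduce the spectral flow of the affine path $Q_t := (1-t)Q_0 + t\,Q_1$ to a finite-dimensional computation by locating exactly where the path can fail to be invertible. Introduce the self-adjoint operator $C := \tfrac12(Q_0Q_1 + Q_1Q_0)$. A short computation using $Q_0^2 = Q_1^2 = \Id$ shows that $C$ commutes with $Q_0$ and with $Q_1$, hence with every $Q_t$, and that
\[
  Q_t^2 \;=\; \big((1-t)^2+t^2\big)\Id + t(1-t)(Q_0Q_1+Q_1Q_0) \;=\; \Id - 2t(1-t)(\Id - C).
\]
Since $C = \tfrac12(Q_0+Q_1)^2 - \Id$ we have $0\le\Id - C\le 2\Id$ and $\sigma(C)\subseteq[-1,1]$, so $Q_t^2 \ge (1-2t)^2\,\Id$. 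Hence $Q_t$ is invertible for every $t\ne\tfrac12$, whereas $Q_{1/2}^2 = \tfrac14(Q_0+Q_1)^2$, so $\ker Q_{1/2} = \ker(Q_0+Q_1)$, which is exactly the eigenspace $V := E_{-1}(C)$ of $C$ at the eigenvalue $-1$.

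Next I would invoke the Fredholm hypothesis \eqref{eq:fredholm-condition}. Since $Q_0+Q_1$ is self-adjoint and $0\notin\sigma_{\mathrm{ess}}(Q_0+Q_1)$, the point $-1$ is isolated in $\sigma(C)$ and has finite multiplicity; thus $V$ is finite-dimensional and $\sigma(C)\subseteq\{-1\}\cup[-1+\delta,1]$ for some $\delta>0$. Both $V$ and $W := V^\perp$ are $C$-invariant, hence invariant under $Q_0$, $Q_1$ and all $Q_t$. On $W$ the identity above gives $Q_t^2|_W \ge \tfrac{\delta}{2}\,\Id$ uniformly in $t\in[0,1]$, so $t\mapsto Q_t|_W$ is a uniformly gapped path and contributes nothing to the spectral flow; by additivity of the spectral flow over the $Q_t$-invariant orthogonal splitting $\calH = V\oplus W$ (with $Q_0,Q_1$ invertible), $\Sf(t\mapsto Q_t) = \Sf(t\mapsto Q_t|_V)$.

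On the finite-dimensional space $V$ one has $Q_0|_V = -Q_1|_V$ --- this is precisely what $V = \ker(Q_0+Q_1)$ says --- so $Q_t|_V = (1-2t)\,Q_0|_V$. The self-adjoint unitary $Q_0|_V$ splits $V$ into its $\pm1$-eigenspaces, and these equal $E_{\pm1}(Q_0)\cap V = E_{\pm1}(Q_0)\cap E_{\mp1}(Q_1)$ (the nontrivial inclusion because any vector in $E_{\pm1}(Q_0)\cap E_{\mp1}(Q_1)$ lies in $\ker(Q_0+Q_1) = V$). On $E_{+1}(Q_0)\cap V$ the eigenvalue $1-2t$ of $Q_t|_V$ decreases through $0$ at $t=\tfrac12$, while on $E_{-1}(Q_0)\cap V$ the eigenvalue $2t-1$ increases through $0$ there; counting these crossings with sign yields
\[
  \Sf\big([0,1]\ni t\mapsto Q_t|_V\big) = \dim\big(E_{-1}(Q_0)\cap E_{1}(Q_1)\big) - \dim\big(E_{1}(Q_0)\cap E_{-1}(Q_1)\big),
\]
which is the assertion. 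Equivalently, since $Q_0|_V,Q_1|_V$ are invertible, this follows from $\Sf = \tfrac12\big(\Sgntr(Q_1|_V) - \Sgntr(Q_0|_V)\big)$ and $Q_1|_V = -Q_0|_V$.

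The main (and essentially only) subtlety is the bookkeeping near $t=\tfrac12$: one must check that the singular time is unique and confined to the finite-dimensional summand $V$, and that the orthogonal splitting $\calH = V\oplus W$ is genuinely invariant under the whole path --- which is exactly where the commutation $[C,Q_0]=[C,Q_1]=0$ and the hypothesis $0\notin\sigma_{\mathrm{ess}}(Q_0+Q_1)$ both enter. The remainder is elementary spectral theory of the single self-adjoint unitary $Q_0|_V$ plus the additivity and homotopy/vanishing properties of the spectral flow recalled in the appendix.
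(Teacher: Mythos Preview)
Your proof is correct and takes a somewhat different route from the paper's. The paper locates the unique crossing at $t=\tfrac12$ by a bare-hands norm-perturbation argument (on $[0,\tfrac12)$ one writes $Q_t=Q_0+t(Q_1-Q_0)$ with $\|t(Q_1-Q_0)\|<1$, and symmetrically on $(\tfrac12,1]$), and then identifies $\ker(Q_0+Q_1)$ with $(E_{-1}(Q_0)\cap E_1(Q_1))\oplus(E_1(Q_0)\cap E_{-1}(Q_1))$ by applying the projections $\tfrac12(\Id\pm Q_0)$ to an arbitrary kernel element. You instead introduce the anticommutator $C=\tfrac12(Q_0Q_1+Q_1Q_0)$, check that it commutes with both $Q_j$, and use its spectral projection onto $\{-1\}$ to split the \emph{entire path} $t\mapsto Q_t$ as a direct sum on $V=E_{-1}(C)=\ker(Q_0+Q_1)$ and its orthogonal complement, the latter being uniformly gapped via the closed formula $Q_t^2=\Id-2t(1-t)(\Id-C)$. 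This is essentially the ``cosine operator'' from the theory of pairs of projections; it buys you a genuinely $Q_t$-invariant reduction to a finite-dimensional block where the path is literally $(1-2t)Q_0|_V$, so the absence of stray crossings is automatic. The paper's argument avoids the auxiliary $C$ but must argue separately that the explicit curves $\pm(2t-1)$ are the only ones hitting zero. Both arguments are short; yours is the more structural one.
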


\begin{rem}
  The pair of self-adjoint unitaries $(Q_0,Q_1)$
  satisfying~\eqref{eq:fredholm-condition} is called a Fredholm
  pair and the integer in the right-hand side of
  \eqref{eq:Sf-equals-index-P} is called the index of the pair
  $(Q_0,Q_1)$.

  Note moreover that the particular $2\times2$ block operators $Q_0$ and $Q_1$  defined in \eqref{eq:ourQ0Q1} indeed form a
  Fredholm pair: calculating $(Q_0+Q_1)^2$ as a $2\times 2$ block operator, we find the result is a sum of two invertible positive operators on each diagonal block, and commutators on the off-diagonal. However, the commutators are, by hypothesis, compact.
\end{rem}

\begin{proof}[Proof of \cref{lem:spectral-flow-Q}]
  On the interval $[0,\frac12)$, we view $Q_t := Q_0 + t(Q_1-Q_0)$ as
  a perturbation of $Q_0$ of norm less than 1, therefore it is
  Fredholm and has empty kernel.  On the interval $(\frac12, 1]$, we
  similarly view $Q_t = Q_1 + (1-t)(Q_0 - Q_1)$ as a perturbation of
  $Q_1$ of norm less than 1.  At $t=\frac12$, $Q_t$ is Fredholm by
  \eqref{eq:fredholm-condition}.  We conclude that $t=\frac12$ is
  the only location where the eigenvalue curves can cross 0, see
  \Cref{fig:specQt}.

  \usetikzlibrary{decorations.markings}
  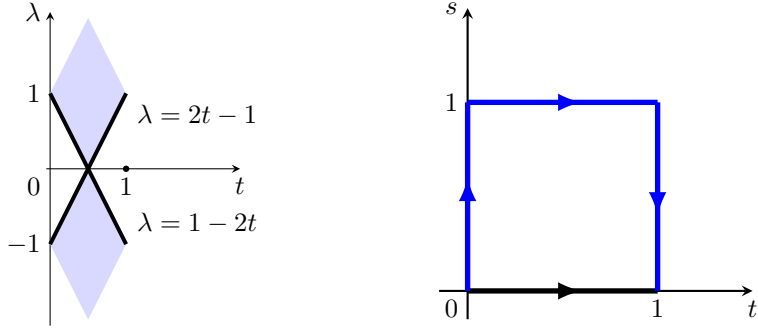
\begin{figure}
    \centering
    \begin{tikzpicture}[scale=1, >=stealth]
      \draw[->] (-0.05,0) -- (2.5,0) node[below] {$t$};
      \draw[->] (0,-2.08) -- (0,2.08) node[left] {$\lambda$};

      \node[below left] at (0,0) {$0$};
      \fill (1,0) circle (1.2pt) node[below] {$1$};
      \node[left]  at (0,1) {$1$};
      \node[left]  at (0,-1) {$-1$};

      \fill[blue!15] (0.5,2) -- (1,1) -- (0.5,0) -- (0,1) -- cycle;
      \fill[blue!15] (0.5,-2) -- (1,-1) -- (0.5,0) -- (0,-1) -- cycle;
      
      \draw[line width=1.5pt] (0,-1) -- (1,1) node[below right] {$\lambda = 2t - 1$};
      \draw[line width=1.5pt] (0, 1) -- (1,-1) node[above right] {$\lambda = 1 - 2t$};
    \end{tikzpicture}
    \hspace{2cm}
    \begin{tikzpicture}[scale=2.5, >=stealth,decoration={markings,
        mark=at position 0.6 with {\arrow{latex};}}]
      \draw[->, thick] (-0.15,0) -- (1.5,0) node[below] {$t$};
      \draw[->, thick] (0,-0.15) -- (0,1.5) node[left] {$s$};
      \node[below left] at (0,0) {$0$};
      \node[below] at (1,0) {$1$};
      \node[left]  at (0,1) {$1$};

      \draw[line width=2pt,black,postaction=decorate] (0,0) -- (1,0);
      \draw[line width=2pt,blue,postaction=decorate] (0,0) -- (0,1);
      \draw[line width=2pt,blue,postaction=decorate] (0,1) -- (1,1);
      \draw[line width=2pt,blue,postaction=decorate] (1,1) -- (1,0);
    \end{tikzpicture}

    \caption{(Left) The shaded area indicates where the spectrum of $Q_t$
      must lie for $t\in[0,1]$.  The only crossing of $0$ can occur at
      $t=\tfrac12$.  Black lines indicate the two eigenspaces we know
      explicitly, namely $E_{-1}(Q_0)\cap E_{1}(Q_1)$ and $E_{1}(Q_0)\cap
      E_{-1}(Q_1)$. (Right) The two paths for computing the spectral flow of $H_{t,s}$ defined by \eqref{eq:Hst-def}.}
    \label{fig:specQt}
  \end{figure}

  It can be checked explicitly that, for any $t$, $E_{-1}(Q_0)\cap E_{1}(Q_1)$ belongs to the eigenspace of $Q_t$ with the eigenvalue $2t-1$.  Similarly, $E_{1}(Q_0)\cap E_{-1}(Q_1)$ is a subspace of the eigenspace of $Q_t$ with the eigenvalue $1-2t$.  Once we show that
  \begin{equation}
    \label{eq:kerQQ}
    \ker(Q_0+Q_1) = \big(E_{-1}(Q_0)\cap E_{1}(Q_1)\big) 
    \oplus \big(E_{1}(Q_0)\cap E_{-1}(Q_1)\big),
  \end{equation}
  we can conclude that $2t-1$ and $1-2t$ are the only curves crossing 0 and the spectral flow is equal to the number of curves going up, i.e.
  $\dim\big(E_{-1}(Q_0)\cap E_{1}(Q_1)\big)$ minus the number of curves going down, i.e. $\dim\big(E_{1}(Q_0)\cap E_{-1}(Q_1)\big)$, yielding \eqref{eq:Sf-equals-index-P}.

  To show \eqref{eq:kerQQ}, we check the inclusion $\supset$ by an explicit computation.  To establish $\subset$, let $x$ be such that $(Q_0+Q_1)x=0$.  We write
  \begin{equation}
    x = \frac12(1+Q_0)x + \frac12(1-Q_0)x =: x_+ + x_-.
  \end{equation}
  Using $Q_0^2 = Q_1^2 = 1$, we compute
  \begin{equation}
    (1-Q_0)x_+ = \frac12(1-Q_0)(1+Q_0)x = \frac12(1-Q_0^2)x = 0, 
  \end{equation}
  and
  \begin{equation}
    (1+Q_1)x_+ = \frac12(1+Q_1+Q_0 + Q_1Q_0)x 
    = \frac12(1+Q_1)(Q_1+Q_0)x = 0. 
  \end{equation}
  Therefore $x_+ \in E_{1}(Q_0)\cap E_{-1}(Q_1)$ and similarly $x_-\in E_{-1}(Q_0)\cap E_{1}(Q_1)$.
\end{proof}
We can finally prove the main index theorem.
\begin{proof}[Proof of \Cref{prop:main-index}]
  All that is left is to compute the right-hand side of
  \eqref{eq:Sf-equals-index-P} when $Q_0$ and $Q_1$ are given by
  \eqref{eq:ourQ0Q1}.  We write
  \begin{equation}
    \label{eq:P0P1-top}
    \Id+Q_0 \;=\; 
    \begin{bmatrix} 
      \Id & B^{\flat*} \\ B^\flat & \Id 
    \end{bmatrix},
    \qquad
    \Id-Q_1 \;=\;
    2\begin{bmatrix} \Lambda^\perp & 0 \\ 0 & \Lambda \end{bmatrix},
  \end{equation}
  where we recall that $A^\flat = 2\Lambda - 1$.

  We compute
  \begin{equation}
    \label{eq:ran-P0}
    E_{-1}(Q_0) = \ker(\Id+Q_0)
    = \big\{(x,-B^\flat x): x\in\calH\big\}
    = \big\{\big(B^{\flat*} y, -y\big): y\in\calH\big\},
  \end{equation}
  \begin{equation}
    \label{eq:ker-P1}
    E_1(Q_1) = \ker(\Id-Q_1)
    \;=\; \mathrm{Ran}\,\Lambda \oplus \ker\Lambda,
  \end{equation}
  and therefore
  \begin{equation}
    \label{eq:ranP0-cap-kerP1}
    E_{-1}(Q_0) \cap E_1(Q_1)
    \;=\;
    \big\{(x,-B^\flat x): x\in \mathrm{Ran}\,\Lambda,\;
    B^\flat x\in \ker\Lambda\big\}
    \simeq
    \ker\big(\Lambda B^\flat\restriction_{\mathrm{Ran}\,\Lambda}\big).
  \end{equation}
  Similarly,
  \begin{equation}
    \label{eq:ker-P0}
    \ker(\Id-Q_0) \;=\; \big\{\big(B^{\flat*} x, x\big) : x\in\calH\big\},
    \qquad
    \ker(\Id+Q_1) \;=\; \ker\Lambda \oplus \mathrm{Ran}\,\Lambda,
  \end{equation}
  and therefore
  \begin{equation}
    \label{eq:kerP0-cap-ranP1}
    E_{1}(Q_0) \cap E_{-1}(Q_1)
    \;=\;
    \big\{\big( B^{\flat*} x,x): x\in \mathrm{Ran}\,\Lambda,\; B^{\flat*}x\in \ker\Lambda\big\}
    \simeq
    \ker\big(\Lambda B^{\flat*}\restriction_{\mathrm{Ran}\,\Lambda}\big).
  \end{equation}
  
  By \Cref{lem:spectral-flow-Q},
  \begin{equation}
    \label{eq:sf-top-equals-index}
    \Sf(Q_t)
    \;=\;
    \dim\ker\big(\Lambda B^\flat\restriction_{\mathrm{Ran}\,\Lambda}\big)
    \;-\;
    \dim\ker\big(\Lambda B^{\flat*}\restriction_{\mathrm{Ran}\,\Lambda}\big)
    \;=\;
    \ind\big(\Lambda B^\flat \Lambda + \Lambda^\perp\big),
  \end{equation}
  where the second equality follows from the decomposition
  $\calH = \mathrm{Ran}\,\Lambda\oplus\ker\Lambda$ being reducing for
  the operator
  $\Lambda B^\flat \Lambda + \Lambda^\perp = \Lambda B^\flat \oplus
  I$.

  Finally,
  \begin{equation}
    \label{eq:final-index-S}
    \ind\big(\Lambda B^\flat \Lambda + \Lambda^\perp\big)
    \;=\;
    \ind\big(\Lambda B \Lambda + \Lambda^\perp\big).
  \end{equation}
  by homotopy invariance of the Fredholm index. Indeed, $\Lambda B_t \Lambda+\Lambda^\perp$ is Fredholm as soon as $[\Lambda,B_t]$ is compact and $B_t$ is invertible. This, however, is guaranteed for the same reasons explained above for $B_t:=(1-t)B+tB^\flat$ for $t\in[0,1]$.
\end{proof}

\subsection{Proof of \Cref{lem:localizer is invertible for
    sufficiently small kappa}}

We will establish the estimate
\begin{equation}
  \label{eq:gap_est}
  \inf_{\ell\in\NN} \gap\br{L_{\kappa,\ell}^\infty} \geq
  \sqrt{\br{1-\kappa}^2\gap(H)^2-\kappa\br{1-\kappa}D^d},
  \qquad d = 1,2,
\end{equation}
from which the claim of the Lemma follows.

In $d=1$, we use estimate in \Cref{lem:block_traceless} with 
$A=\kappa f_\ell(X)$
and $B = (1-\kappa)S$.  The commutator $[A,B]$ is
uniformly bounded by \Cref{lem:commutator bounds with
  fl}, yielding $\norm{[A,B]}\leq
\kappa(1-\kappa)\frac{C}{\cosh\br{\mu}-1}=:\kappa(1-\kappa)D$.  We use
the fact that $\gap(S)=\gap(H)$ to arrive to \eqref{eq:gap_est}.

If $d=2$, we apply \Cref{lem:block_traceless} with $A=(1-\kappa)H$ and
$B=\kappa f_\ell(X)$. Again, the commutator is uniformly bounded by
\Cref{lem:commutator bounds in two dimensions}, so,
$\norm{[A,B]}\leq \kappa(1-\kappa)\br{\frac{C}{\cosh\br{\mu}-1}}^2 =:
\kappa(1-\kappa)D^2$ and estimate \eqref{eq:gap_est} results.

Finally, the quantity in the square root is strictly positive for all values
of $\kappa$ strictly smaller than
\begin{equation*}
  \frac{\gap(H)^2}{\gap(H)^2+D^d} \ =:\  2\kappa_\star,  
\end{equation*}
completing the proof.

\subsection{Proof of \Cref{lem:flow is insensitive to decoupling
    finite box with outside}}

With respect to the orthogonal decomposition
\begin{equation*}
  \calH = \calH_\ell \oplus \calH_{\ell^c},
\end{equation*}
we define
\begin{equation}
  \label{eq:Moperator}
  M_{\kappa,s} =
  \begin{bmatrix}
    J_\ell^* L_{\kappa,\ell}^\infty J_\ell
    & s \left(J_\ell^* L_{\kappa,\ell}^\infty J_{\ell^c}\right) \\
    s\left(J_{\ell^c}^* L_{\kappa,\ell}^\infty J_\ell\right)
    & J_{\ell^c}^* L_{\kappa,\ell}^\infty J_{\ell^c},  
  \end{bmatrix}
\end{equation}
so that $M_{\kappa,s=1} = L_{\kappa,\ell}^\infty$ and $M_{\kappa,s=0}
= L_{\kappa,\ell} \oplus L_{\kappa,\ell^c}$.

In the region $[\kappa_\star,1] \times [0,1]$, the operator
$M_{\kappa,s}$ is \emph{essentially} gapped: the only block of
non-finite rank is $J_{\ell^c}^* L_{\kappa,\ell}^\infty J_{\ell^c}$
which is invertible by \eqref{eq:gap of off-diagonal piece} below.  By
homotopy invariance of the spectral flow, see
\cref{thm:spec_flow_basics}, the claim in equation
\eqref{eq:flow_decouples} of the Lemma will be established if we can
show that the flow of $M_{\kappa,s}$ in the variable $s$ is zero at
both $\kappa=\kappa_\star$ and $\kappa=1$.

At $\kappa=1$, the operator $L_{\kappa,\ell}^\infty$  contains only position terms
$f_\ell(X)$ (see
\eqref{eq:infinite-volume localizer}), which commute (in the appropriate sense) with $J_\ell$
and therefore $J_{\ell^c}^* L_{\kappa,\ell}^\infty J_\ell = 0$ because
$J_{\ell^c}^* J_\ell = 0$.  In other words, $M_{1,s}$ is independent
of $s$ and generates no flow.

We now turn to the $[0,1]\ni s$ flow at $\kappa=\kappa_\star$, for which we intend
to use \Cref{lem:invertible_cut}.  Note that the
off-diagonal pieces are bounded: since $f_\ell(X)$ commutes with $J_\ell$ 
\eq{
  \norm{J_{\ell^c}^* L_{\kappa_\star,\ell}^\infty J_\ell} \leq (1-\kappa_\star) \norm{H} \leq \norm{H}.
}
For the $(2,2)$-block we have
\begin{equation}
  \label{eq:gap of off-diagonal piece}
  \gap(J_{\ell^c}^*\,  L_{\kappa_\star,\ell}^\infty\, J_{\ell^c}) \geq
  \kappa_\star \ell - \norm{H}.
\end{equation}
Indeed, outside the box $\Bell$, the multiplication operator $f_\ell(X)$ in
\eqref{eq:infinite-volume localizer} is a multiplication by a number of
absolute value $\ell$, with the spectral gap equal to $\ell$.  We
therefore use the estimate $\gap(A+B) \geq \gap(A) - \|B\|$, see
\Cref{lem:basic_gap}, to obtain \eqref{eq:gap of off-diagonal piece}.

For the gap of $M_{\kappa_\star, s}$ at $s=1$, we use the estimate
\eqref{eq:gap_est}, which, with $\kappa=\kappa_\star < \frac12$ yields
\begin{equation*}
  \gap\left(M_{\kappa_\star, s}\right)
  = \gap\left(L_{\kappa_\star,\ell}^\infty \right)
  \geq \frac12 \gap(H).
\end{equation*}

The condition in \cref{lem:invertible_cut} becomes
\begin{equation}
  \label{eq:equiv_ell_cond}
  \kappa_\star \ell - \norm{H}
  > \norm{H} \max\br{\Set{1,\frac{\norm{H}}{\frac18 \gap\br{H}}}},  
\end{equation}
which is equivalent to \eqref{eq:constraint_on_ell}.

We have thus established \eqref{eq:flow_decouples}.  To pass to
\eqref{eq:only_finite_flow} we observe that, analogously to
\eqref{eq:gap of off-diagonal piece}, we have for $\kappa\geq\kappa_\star$
\begin{equation*}
  \gap\left(L_{\kappa,\ell^c}\right)
  = \gap(J_{\ell^c}^*\,  L_{\kappa,\ell}^\infty\, J_{\ell^c}) \geq
  \kappa \ell - \norm{H} > \kappa_\star \ell - \norm{H},
\end{equation*}
and therefore the operator $L_{\kappa,\ell^c}$ is invertible by
\eqref{eq:equiv_ell_cond} and generates no flow in
\eqref{eq:flow_decouples}.  The proof of our lemma is complete.

\section{The bulk-edge correspondence perspective on \Cref{lem:topological index is the spectral flow}}\label{sec:bulk-edge correspondence perspective}
Here we want to shed some light on a physical interpretation of the parameter $\kappa$ and why the scheme of \Cref{lem:topological index is the spectral flow} works, beyond the rather abstract \Cref{prop:main-index}.  We focus strictly on the $1$-dimensional case; we are unaware of how to make sense of what follows for $d=2$. We shall recast the localizer as an effective two-dimensional model.

To that end, define the infinite-volume localizer now as
\eql{
  L_{\kappa} := \begin{bmatrix}
    0 & S^\ast \\ S & 0
  \end{bmatrix} + \kappa\sgn(X)\otimes\sigma_3\qquad(\kappa\in[0,\infty))
} where $S$ is the chiral block of $H$ as defined in \Cref{eq:chiral constraint}. Note this notation differs from $L_{\kappa,\ell}^\infty$ in the previous section since now the interpolation parameter goes from $0$ to $\infty$, which should not matter too much. Moreover, in this definition we have set $\ell=1$ compared with $L_{\kappa,\ell}^\infty$. The plan is to first establish equivalence with the spectral flow and only \emph{after} change from $\ell=1$ to larger values.

Within this section, our goal is merely to give another plausible explanation for \Cref{lem:topological index is the spectral flow} via the bulk-edge correspondence. I.e., we will show 
\begin{lem}\label{lem:BEC proof of index theorem}
  \eq{
    \findex\br{\bbLambda S }  = \Sf\br{[0,\infty)\ni\kappa\mapsto L_\kappa}
  } where $\bbLambda S \equiv \Lambda S \Lambda + \Lambda^\perp$.
\end{lem}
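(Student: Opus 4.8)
The plan is to reduce the claim, after a rescaling of the interpolation parameter, to the abstract index theorem \Cref{prop:main-index}, and to accompany that reduction with the promised domain-wall reading that exhibits the spectral flow as an \emph{edge} index. First I would check that the right-hand side is well-defined. Write $L_\kappa = H + \kappa M$ with $M:=\sgn(X)\otimes\sigma_3$, and note $M^2=\Id$ (since $f_1=\sgn$ is $\pm1$-valued) while the locality of $S$ together with \Cref{lem:commutator bounds with fl} makes $[\sgn(X),S]$, hence the anticommutator $\{H,M\}=\left(\begin{smallmatrix}0&[\sgn X,S^{*}]\\-[\sgn X,S]&0\end{smallmatrix}\right)$, compact. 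Squaring, $L_\kappa^2 = H^2+\kappa^2\Id+\kappa\{H,M\}\equiv H^2+\kappa^2\ge \gap(H)^2+\kappa^2$ modulo compacts, so each $L_\kappa$ is self-adjoint Fredholm with essential spectrum outside $(-\gap(H),\gap(H))$; moreover $L_0=H$ is invertible, and from $L_\kappa^2\ge \gap(H)^2+\kappa^2-2\kappa\norm{S}$ so is $L_\kappa$ once $\kappa$ is large, so the spectral flow over $[0,\infty)$ is finite and reduces to a flow over a compact interval.

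The domain-wall picture, which is the point of this section, I would present as follows. Decompose $\calH$ along $\{x\ge0\}$ versus $\{x<0\}$ via the spectral projections of $\sgn(X)$; the Dirichlet restriction of $L_\kappa$ to each half-space has the form $K_\pm\pm\kappa\sigma_3$ with $K_\pm$ an off-diagonal chiral block, and since $\sigma_3$ anticommutes with any such block, $(K_\pm\pm\kappa\sigma_3)^2=K_\pm^2+\kappa^2\ge\kappa^2$. Hence both ``bulk'' half-spaces are uniformly gapped for $\kappa>0$ and carry no spectral flow, while $L_\kappa$ couples them by an exponentially small interface term and the mass $\kappa\sgn(X)$ reverses sign at $x=0$ --- a Jackiw--Rebbi domain wall. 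The spectral flow of the family is therefore the signed number of interface modes bound to $x=0$ that cross zero as the wall is switched on, i.e.\ an edge index; in the prototype $N=2$, $S=\text{shift}$ it is a single bound state detaching from zero at the resonance $\kappa=1$, matching $\findex(\bbLambda S)=\pm1$.

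For the identification proper I would rescale: with $t=\kappa/(1+\kappa)$ one has $L_\kappa=(1+\kappa)\bigl((1-t)H+tM\bigr)$, so by reparametrization- and positive-scaling-invariance of the spectral flow (a special case of homotopy invariance, \Cref{thm:spec_flow_basics}), $\Sf\bigl([0,\infty)\ni\kappa\mapsto L_\kappa\bigr)=\Sf\bigl([0,1]\ni t\mapsto (1-t)H+tM\bigr)$, the path extending continuously to the invertible endpoint $M$ at $t=1$ with uniform gap nearby. But $(1-t)H+tM$ is exactly the path $(1-t)H_0+tH_1$ of \Cref{prop:main-index} with $B=S$ and $A=\sgn(X)=f_1(X)$, both bounded invertible and with $[A,B]$ compact; hence the flow equals $\findex(\Pi S\Pi+\Pi^\perp)$ with $\Pi=\chi_{[0,\infty)}(\sgn(X))$, which --- up to a finite-rank modification at the site $x=0$ that does not change a Fredholm index --- is $\Lambda$. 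This is $\findex(\bbLambda S)$, as claimed.

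The step I expect to be the real obstacle is carrying out this identification \emph{intrinsically}, i.e.\ deducing $\Sf(L_\kappa)=\findex(\bbLambda S)$ from the domain-wall picture without invoking \Cref{prop:main-index}: that would require uniform exponential localization of the in-gap modes at $x=0$ and a signed bound-state bookkeeping matched to the Fredholm index of the fully coupled system, and here --- unlike in \Cref{lem:flow is insensitive to decoupling finite box with outside} --- the interface coupling, although trace class, is not finite rank and carries \emph{all} of the flow, so one cannot simply decouple. As developing that bookkeeping is tantamount to re-deriving the bulk--edge correspondence, I would present the rescaling-plus-\Cref{prop:main-index} argument as the actual proof and keep the domain-wall discussion as the interpretive layer it is meant to be.
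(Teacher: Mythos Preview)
Your argument is correct: the rescaling $\kappa\mapsto t=\kappa/(1+\kappa)$ together with positive-scaling invariance of the spectral flow reduces $L_\kappa$ to the straight-line path $(1-t)H+t(\sgn(X)\otimes\sigma_3)$ of \Cref{prop:main-index} with $B=S$ and $A=\sgn(X)$, and the finite-rank discrepancy at $x=0$ between $\chi_{[0,\infty)}(\sgn X)$ and the paper's $\Lambda$ is indeed harmless for the index. The paper, however, proceeds quite differently, and deliberately so: the whole point of \Cref{sec:bulk-edge correspondence perspective} is to reach the identity \emph{without} appealing to \Cref{prop:main-index}. It treats $\kappa$ as a genuine second momentum, so that $L_\kappa^\pm=H\pm\kappa\sigma_3$ become fibers of two-dimensional bulk Hamiltonians on $\ell^2(\ZZ)\otimes L^2(\RR)$; it then \emph{computes} their Chern numbers explicitly via the double-commutator Kubo formula (after flattening $S\mapsto U=\polar(S)$), obtaining $\Chern(L_\kappa^\pm)=\pm\findex(\bbLambda U)$, invokes the bulk--edge correspondence to identify the difference of these Chern numbers with the spectral flow of the interface system $L_\kappa$ over $\kappa\in\RR$, and finally uses the $\sigma_3$-conjugation symmetry $\sigma_3 L_\kappa\sigma_3=-L_{-\kappa}$ to halve the flow to $[0,\infty)$. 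Your route is shorter and fully rigorous; what it gives up is precisely the content the section is meant to supply --- your domain-wall paragraph is a real-space Jackiw--Rebbi heuristic (Dirichlet half-spaces and interface bound states), whereas the paper's is a momentum-space two-dimensional reading in which the spectral flow is literally realised as the edge index of a higher-dimensional model and matched to an explicitly computed bulk Chern number. You correctly anticipate this trade-off in your closing paragraph.
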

The strategy for the proof is as follows. We interpret $\mathbb{R}\ni\kappa\mapsto L_{\kappa}$
as the fiber of a two-dimensional \emph{domain-wall Hamiltonian}, i.e., its
second coordinate $\xi$ has conjugate momentum variable $\kappa$.
The domain-wall Hamiltonian interpolates between two copies: the one
on the left-hand side $L_{\kappa}^{-}$ and the one on the right-hand
side $L_{\kappa}^{+}$, see \Cref{fig:domain wall system}. The two operators are given by 
\eql{\label{eq:the two sides of the domain wall localizer}
  L_{\kappa}^{\pm}  :=  H\pm\kappa\sigma_{3}\,.
} 

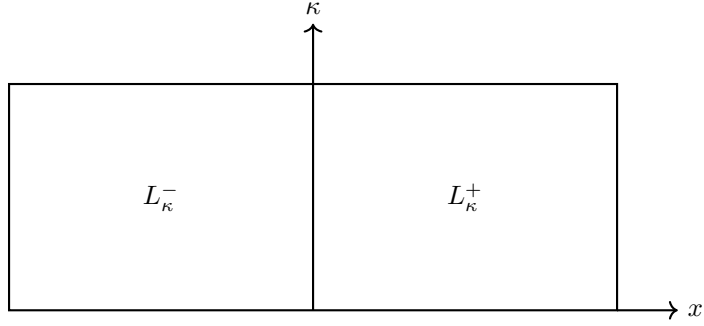
\begin{figure}
  \centering
  \begin{tikzpicture}[thick]

    \draw (-4,0) -- (-4,3) -- (4,3) -- (4,0) -- cycle;

    \draw (0,0) -- (0,3);

    \draw[->] (0,3) -- (0,3.8) node[above] {$\kappa$};

    \draw[->] (4,0) -- (4.8,0) node[right] {$x$};

    \node at (-2,1.5) {$L_\kappa^{-}$};
    \node at ( 2,1.5) {$L_\kappa^{+}$};

  \end{tikzpicture}
  \caption{The domain-wall system in mixed real-space for $x$ momentum space for $\kappa$.}
  \label{fig:domain wall system}
\end{figure}

The rest of this section is devoted to the proof of \Cref{lem:BEC proof of index theorem}.

As defined above, we work with \cref{eq:the two sides of the domain wall localizer} and interpret this operator as the fiber of an operator on $\left(\ell_{x}^{2}\left(\mathbb{Z}\right)\otimes L_{\xi}^{2}\left(\mathbb{R}\right)\right)\otimes\mathbb{C}^{2N}$
with $\kappa$ as the momentum variable conjugate to the second coordinate
$\xi$, namely, the fiber of the operator 
\[
  H\pm\sigma_{3}\ii\partial_{\xi}\,.
\]
This operator is \emph{always }gapped. Indeed,
\eq{
  \left(L_{\kappa}^{\pm}\right)^{2} = \begin{bmatrix}\kappa^{2}+\left|S\right|^{2} & 0\\
    0 & \kappa^{2}+\left|S^{\ast}\right|^{2}
  \end{bmatrix} \geq  \kappa^{2}+\norm{S^{-1}}^{-2}\Id\geq \norm{S^{-1}}^{-2}\Id\,.
}
Cf. with $L_{\kappa}$ which is always \emph{essentially }gapped but
whose gap should actually close at some $\kappa$ so as to yield a
non-zero spectral flow.

The gap allows us to define its flat version $\Sigma_{\kappa}^{\pm}$,
given by 
\begin{eqnarray*}
  \Sigma_{\kappa}^{\pm} = \sgn\left(L_{\kappa}^{\pm}\right)= \begin{bmatrix}\kappa^{2}+\left|S\right|^{2} & 0\\
    0 & \kappa^{2}+\left|S^{\ast}\right|^{2}
  \end{bmatrix}^{-\frac{1}{2}}\begin{bmatrix}\pm\kappa & S^{\ast}\\
    S & \mp\kappa
  \end{bmatrix}\,.
\end{eqnarray*}
Note that the $\kappa\to\pm\infty$ boundary conditions are
\begin{eqnarray*}
  \Sigma_{\kappa}^{\pm} & \stackrel{\kappa\to\pm\infty}{\approx} & \pm\sgn\left(\kappa\right)\sigma_{3}
\end{eqnarray*}
so that we \emph{do not }have a compactification of $\kappa\in\mathbb{R}$.
We emphasize that this is not an issue. The Chern number is usually
associated with a vector bundle (the Bloch bundle) over a two-dimensional
torus when a system has full translation invariance and its two space
coordinates are defined over $\mathbb{Z}^{2}$, whence the Pontryagin
dual becomes $\mathbb{T}^{2}$, the Brillouin torus. The Hall conductivity,
and the topological index associated to it, however, is still well-defined
even without translation invariance, and when the real-space coordinates
are defined over $\mathbb{R}^{2}$, and in fact, also when one of
them is defined over $\mathbb{Z}$ and the other over $\mathbb{R}$,
which is the particular case at hand. 

We also note in passing that this problem could have been avoided all-together in two different ways:
\begin{itemize}
\item Firstly, interpolate using trigonometric instead of linear functions so as to obtain a periodic model on $\mathbb{S}^1$ instead of $\RR$:
  \eql{
    L_{\kappa} := \cos(\kappa)\begin{bmatrix}
      0 & S^\ast \\ S & 0
    \end{bmatrix} + \sin(\kappa)\sgn(X)\otimes\sigma_3\qquad(\kappa\in[0,\pi/2))
  }
  
\item The model is still not periodic on $\kappa\in(-\pi/2,\pi/2)$ but may be extended in a periodic way which does not affect the overall flow (for example by connecting both ends to a mutual $\Id\otimes\sigma_1$).
\end{itemize}

This makes the conceptual connection to the bulk-edge correspondence on $\ell^2(\ZZ^2)$ easier, but the calculations to follow much harder. Since this section is anyway only a heuristic we proceed with the model on $\ell^2(\ZZ)\otimes L^2(\RR)$.

Hence, for us, the second
coordinate $\kappa$ is to be interpreted as a momentum coordinate.
As such the appropriate ``double-commutator'' Kubo formula for the
Hall conductivity in this case is given by 
\[
  \Chern\left(L_{\kappa}^{\pm}\right)=\ii\int_{\kappa\in\mathbb{R}}\dif{\kappa}\tr_{\ell^2(\ZZ)}P_{\kappa}^{\pm}\left[\left[\Lambda,P_{\kappa}^{\pm}\right],-\ii\partial_{\kappa}P_{\kappa}^{\pm}\right]
\]
where $P_{\kappa}^{\pm}\equiv\frac{1}{2}\left(\Id-\Sigma_{\kappa}^{\pm}\right)$
is the associated ``Fermi'' projection (the Fermi energy in this
case is chosen as zero).

\subsection{Flattening operators}

\begin{claim}
  We may replace $S$ with $U:=\operatorname{polar}\left(S\right)\equiv S\left|S\right|^{-1}$
  in the localizer's Chern number calculation, i.e., 
  \begin{eqnarray*}
    \Chern\left(\begin{bmatrix}0 & S^{\ast}\\
        S & 0
      \end{bmatrix}\pm\kappa\sigma_{3}\right) & = & \Chern\left(\begin{bmatrix}0 & U^{\ast}\\
        U & 0
      \end{bmatrix}\pm\kappa\sigma_{3}\right)\,.
  \end{eqnarray*}
\end{claim}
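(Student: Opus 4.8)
The claim asserts that in the two-dimensional domain-wall (bulk-edge) picture, flattening $S$ to its polar part $U = \operatorname{polar}(S)$ does not change the Chern number of the bulk operator $L_\kappa^\pm = \bigl[\begin{smallmatrix}0 & S^\ast \\ S & 0\end{smallmatrix}\bigr] \pm \kappa\sigma_3$ on $\ell^2(\ZZ)\otimes L^2(\RR)\otimes\CC^{2N}$. The plan is to exhibit a norm-continuous path of gapped operators connecting $L_\kappa^\pm$ (with $S$) to the version with $U$, and then invoke homotopy invariance of the Chern number (equivalently, of the index pairing of the associated Fermi projection with $\Lambda$), exactly as was done for the Fredholm index in the proof of \Cref{prop:main-index}.

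**Key steps.** First I would set $S_r := S^\flat\bigl((1-r)|S| + r\,\Id\bigr) = (1-r)S + r\,U$ for $r\in[0,1]$, so that $S_0 = S$ and $S_1 = U$; note $S_r$ is bounded invertible for every $r$ since $(1-r)|S| + r\,\Id \geq \min\{\gap(S),1\}\,\Id > 0$, and $r\mapsto S_r$ is norm-continuous. Second, I would define the interpolating two-dimensional operator (the fiber over the momentum variable $\kappa$ conjugate to $\xi$)
\[
  L_{\kappa,r}^\pm := \begin{bmatrix} 0 & S_r^\ast \\ S_r & 0 \end{bmatrix} \pm \kappa\sigma_3,
\]
and observe, just as in the computation in the excerpt, that $\bigl(L_{\kappa,r}^\pm\bigr)^2 = \operatorname{diag}(\kappa^2 + |S_r|^2,\ \kappa^2 + |S_r^\ast|^2) \geq \norm{S_r^{-1}}^{-2}\Id$, so the full operator $H_r \pm \sigma_3\,\ii\partial_\xi$ is gapped at $0$ uniformly in $\kappa$, with a gap bounded below by $\min\{\gap(S),1\}$ independently of $r$. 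Third, I would check that the associated Fermi projection $P_{\kappa,r}^\pm = \tfrac12(\Id - \Sigma_{\kappa,r}^\pm)$, where $\Sigma_{\kappa,r}^\pm = \sgn(L_{\kappa,r}^\pm)$, depends norm-continuously on $r$ (continuity of the functional calculus applied to a norm-continuous gapped family) and that the commutator $[\Lambda, P_{\kappa,r}^\pm]$ remains trace-class-in-the-appropriate-sense for all $r$ — this follows because locality of $H$ (hence of $S$) is inherited by $S_r$ and by $U$, so the relevant commutators with $\Lambda$ are compact/Schatten for the whole family, exactly as invoked for $[\Lambda, S^\flat]$ earlier in the excerpt. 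Finally, homotopy invariance of the Chern number along $r\in[0,1]$ (equivalently, constancy of the index pairing under norm-continuous deformations through Fredholm/gapped data) gives $\Chern(L_{\kappa,0}^\pm) = \Chern(L_{\kappa,1}^\pm)$, which is the claim.

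**Main obstacle.** The genuinely delicate point is not the deformation itself but making sure the Chern number (defined here by the non-compactly-supported integral Kubo-type formula over $\kappa\in\RR$, since there is no compactification of the $\kappa$-line) is actually well-defined and homotopy-invariant along the path. One must verify that the integrand $\tr_{\ell^2(\ZZ)}\, P_{\kappa,r}^\pm\bigl[[\Lambda,P_{\kappa,r}^\pm],-\ii\partial_\kappa P_{\kappa,r}^\pm\bigr]$ is integrable in $\kappa$ with bounds uniform in $r$ — this requires decay of $[\Lambda,P_{\kappa,r}^\pm]$ and of $\partial_\kappa P_{\kappa,r}^\pm$ as $\kappa\to\pm\infty$, which holds because $\Sigma_{\kappa,r}^\pm \to \pm\sgn(\kappa)\sigma_3$ (a $\Lambda$-commuting, $\kappa$-independent limit) at a rate controlled by $|S_r|^{-1}$, hence uniformly in $r$. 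Granting these uniform estimates, dominated convergence and the standard index-pairing argument close the gap; since this section is explicitly heuristic, one may alternatively just cite homotopy invariance of the index pairing and the fact that $\Chern(L_\kappa^\pm)$ coincides with a Fredholm index to which the argument of \Cref{prop:main-index} applies verbatim with $B$ replaced by $S_r$.
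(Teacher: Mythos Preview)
Your proposal is correct and follows essentially the same approach as the paper: the same flattening homotopy $S_r = U\bigl((1-r)|S|+r\Id\bigr)$, the same gap computation $(L_{\kappa,r}^\pm)^2 \geq \min(1,\gap(S))^2\Id$, and the same appeal to locality of $U$ and homotopy invariance. The one presentational difference is that the paper does not linger over the Kubo-integral integrability issue you flag as the ``main obstacle'' but instead immediately rewrites the Chern number as the Fredholm index $\findex_{\ell^2(\ZZ)\otimes L^2(\RR)}(P_t L P_t + P_t^\perp)$ with the mixed Laughlin operator $L=\operatorname{polar}(X+\ii\Xi)$ and invokes norm-stability of the index---exactly the alternative you mention in your final sentence.
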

The statement is very similar to \Cref{lem:flatten_the_flow} and
  \eqref{eq:final-index-S}.

\begin{proof}
  As we have seen, 
  \begin{eqnarray*}
    \left(\begin{bmatrix}0 & S^{\ast}\\
        S & 0
      \end{bmatrix}\pm\kappa\sigma_{3}\right)^{2} & = & \begin{bmatrix}\kappa^{2}+\left|S\right|^{2} & 0\\
      0 & \kappa^{2}+\left|S^{\ast}\right|^{2}
    \end{bmatrix}\geq\norm{S^{-1}}^{-2}\Id
  \end{eqnarray*}
  which means that as we deform 
  \begin{eqnarray*}
    \left[0,1\right]\ni t & \mapsto & S_{t}:=\left(1-t\right)U\left|S\right|+tU=U\left[\left(1-t\right)\left|S\right|+t\Id\right]
  \end{eqnarray*}
  we keep the gap of 
  \[
    \begin{bmatrix}0 & S_{t}^{\ast}\\
      S_{t} & 0
    \end{bmatrix}\pm\ii\partial_{\kappa}\sigma_{3}
  \]
  remains open, since 
  \[
    \left(1-t\right)\left|S\right|+t\Id\geq\min(1,\norm{S^{-1}}^{-1})\Id\,.
  \]
  Locality is furthermore clearly conserved since $S_{t}$ is local,
  since $U$ is, thanks to the gap assumption. Since the gap and locality
  remain open, the path 
  \begin{eqnarray*}
    \left[0,1\right]\ni t & \mapsto & \begin{bmatrix}0 & S_{t}^{\ast}\\
      S_{t} & 0
    \end{bmatrix}\pm\ii\partial_{\xi}\sigma_{3}
  \end{eqnarray*}
  is a norm continuous uniform (in $t$) path of local gapped operators
  on $\ell^{2}\left(\mathbb{Z}\right)\otimes L^{2}\left(\mathbb{R}\right)$, and as such, since the Chern number is
  given by an index of a Fredholm operator built out of $\sgn\left(\begin{bmatrix}0 & S_{t}^{\ast}\\
      S_{t} & 0
    \end{bmatrix}\pm\kappa\sigma_{3}\right)$, namely, 
  \[
    \Chern\left(\begin{bmatrix}0 & S_{t}^{\ast}\\
        S_{t} & 0
      \end{bmatrix}\pm\kappa\sigma_{3}\right)=\findex_{\ell^{2}\left(\mathbb{Z}\right)\otimes L^{2}\left(\mathbb{R}\right)}\left(P_{t}LP_{t}+P_{t}^{\perp}\right)
  \]
  where the $t$-dependent Fermi projection is
  \[
    P_{t}:=\chi_{\left(-\infty,0\right)}\left(\begin{bmatrix}0 & S_{t}^{\ast}\\
        S_{t} & 0
      \end{bmatrix}\pm\ii\partial_{\xi}\sigma_{3}\right)
  \]
  and the mixed Laughlin flux insertion operator is
  \begin{eqnarray*}
    L & := & \operatorname{polar}\left(X+\ii\Xi\right)\,.
  \end{eqnarray*}
  The map $t\mapsto P_{t}$ preserves all these properties, so by the
  stability of the Fredholm index w.r.t. norm continuous deformations,
  the Chern number remains constant throughout this path.
\end{proof}
\begin{cor}
  As a result, to show 
  \[
    \findex\left(\mathbb{\Lambda}S\right)=\pm\Chern\left(\begin{bmatrix}0 & S^{\ast}\\
        S & 0
      \end{bmatrix}\pm\kappa\sigma_{3}\right)\,,
  \]
  since $\findex\left(\mathbb{\Lambda}S\right)=\findex\left(\mathbb{\Lambda}U\right)$
  \cite{Graf_Shapiro_2018_1D_Chiral_BEC}, we can just as well show that 
  \[
    \findex\left(\mathbb{\Lambda}U\right)=\pm\Chern\left(\begin{bmatrix}0 & U^{\ast}\\
        U & 0
      \end{bmatrix}\pm\kappa\sigma_{3}\right)\,.
  \]
\end{cor}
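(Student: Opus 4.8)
The Corollary is a purely formal consequence of what precedes it, so the plan for it is short. By the preceding Claim --- whose proof deforms along $S_t := U\br{(1-t)\abs{S} + t\,\Id}$, an invertible, local, gapped path along which the associated two-dimensional operators stay gapped --- the two Chern numbers appearing on the right-hand sides of the two displayed identities coincide. By the cited identity $\findex(\bbLambda S) = \findex(\bbLambda U)$ of \cite{Graf_Shapiro_2018_1D_Chiral_BEC} --- which is the very same homotopy applied instead to the half-space compression $\Lambda S_t\Lambda + \Lambda^\perp$, exactly as in \Cref{lem:flatten_the_flow} and \eqref{eq:final-index-S} --- the two Fredholm indices appearing on the left-hand sides coincide. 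Hence the two displayed equations are logically equivalent and the Corollary follows; there is nothing to overcome at this step.

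The content that actually remains --- i.e.\ finishing the proof of \Cref{lem:BEC proof of index theorem} via this reduction --- is the flattened identity $\findex(\bbLambda U) = \pm\Chern\br{\begin{bmatrix}0 & U^{\ast}\\ U & 0\end{bmatrix} \pm \kappa\sigma_3}$. The plan there is: (i) realize the flattened family $\kappa\mapsto L_\kappa$ with $S$ replaced by $U$ as the fiber at $\xi$-momentum $\kappa$ of a genuine two-dimensional operator on $\ell^2(\ZZ_x)\otimes L^2(\RR_\xi)\otimes\CC^{2N}$, obtained by promoting $\kappa$ to the momentum operator conjugate to $\xi$; (ii) observe that away from the domain wall $x=0$ this operator coincides with one of the two uniformly gapped bulk operators $L^\pm_\kappa$ (again with $S\mapsto U$), whose Chern numbers are \emph{opposite}, since the reflection $\xi\mapsto-\xi$ exchanges these two operators while reversing the orientation of the two-dimensional parameter space; (iii) invoke the bulk-edge correspondence, so that the spectral flow of the family of fibers localized at the wall is the difference of the two bulk Chern numbers and hence, once one accounts for the restriction of the flow to $\kappa\in[0,\infty)$, equals $\pm\Chern$ of a single bulk piece; (iv) evaluate that bulk Chern number through the Kubo double-commutator formula quoted above, rewrite it as a Fredholm index $\findex(PLP + P^\perp)$ with $L := \polar(X+\ii\Xi)$ the mixed Laughlin operator, and collapse it to $\findex(\Lambda U\Lambda + \Lambda^\perp) = \findex(\bbLambda U)$ by the same kind of explicit $2\times 2$-block manipulation --- exploiting the off-diagonal chiral structure and the linear dependence on $\kappa$ --- that was used in the proof of \Cref{prop:main-index}.

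The main obstacle lies in steps (iii)--(iv): making the bulk-edge correspondence rigorous in this somewhat unusual geometry --- a domain wall rather than a genuine boundary, on the mixed lattice--continuum space $\ell^2(\ZZ)\otimes L^2(\RR)$, with $\kappa$ a \emph{momentum} variable that is not compactified --- and then carrying the signs and the normalization through the Chern-number computation so that the half-line flow over $\kappa\in[0,\infty)$ reproduces a single bulk invariant rather than the full difference of the two. Since \Cref{sec:bulk-edge correspondence perspective} is explicitly advertised as only a heuristic, the intended standard of rigor for these last steps is deliberately lower than in \Cref{sec:proof}; a fully rigorous treatment would presumably route through Phillips' theorem \cite{Phillips} or a bulk-edge argument of Elgart--Graf--Schenker type \cite{EGS_2005}.
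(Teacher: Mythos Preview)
Your first paragraph is correct and is exactly the paper's (implicit) argument: the Corollary carries no separate proof in the paper because it is immediate from the preceding Claim (equality of the two Chern numbers under $S\mapsto U$) together with the cited identity $\findex(\bbLambda S)=\findex(\bbLambda U)$.

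The remaining two paragraphs go beyond the Corollary and sketch the rest of \Cref{sec:bulk-edge correspondence perspective}; that outline is broadly accurate, but note that the paper's actual step corresponding to your (iv) is a direct computation of the double-commutator Chern formula --- expanding $\Sigma_\kappa^\pm$, $[\Lambda,\Sigma_\kappa^\pm]$ and $\partial_\kappa\Sigma_\kappa^\pm$ explicitly as $2\times2$ blocks and integrating over $\kappa$ --- which lands immediately on $\mp\tr(U^\ast[\Lambda,U])=\pm\findex(\bbLambda U)$, rather than passing through the Fredholm expression $\findex(PLP+P^\perp)$ with the mixed Laughlin operator.
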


\subsubsection{Calculation of the Chern number}

We now show 
\[
  \findex\left(\mathbb{\Lambda}U\right)=\pm\Chern\left(\begin{bmatrix}0 & U^{\ast}\\
      U & 0
    \end{bmatrix}\pm\kappa\sigma_{3}\right)
\]
using the double-commutator formula for the Chern number:
\[
  \Chern\left(\begin{bmatrix}0 & U^{\ast}\\
      U & 0
    \end{bmatrix}\pm\kappa\sigma_{3}\right)=-\frac{1}{8}\int_{\kappa\in\mathbb{R}}\dif{\kappa}\tr_{\ell^{2}\left(\mathbb{Z}\right)}\left(\Sigma_{\kappa}^{\pm}\left[\left[\Lambda,\Sigma_{\kappa}^{\pm}\right],\partial_{\kappa}\Sigma_{\kappa}^{\pm}\right]\right)\,.
\]

We have 
\begin{eqnarray*}
  \left[\Lambda,\Sigma_{\kappa}^{\pm}\right]  \equiv  \left[\begin{bmatrix}\Lambda & 0\\
      0 & \Lambda
    \end{bmatrix},\Sigma_{\kappa}^{\pm}\right]= \begin{bmatrix}0 & \left(\kappa^{2}+1\right)^{-\frac{1}{2}}\left[\Lambda,U^{\ast}\right]\\
    \left(\kappa^{2}+1\right)^{-\frac{1}{2}}\left[\Lambda,U\right] & 0
  \end{bmatrix}
\end{eqnarray*} and
\eq{
  \partial_{\kappa}\Sigma_{\kappa}^{\pm} 
  = \begin{bmatrix}
    \pm(\kappa^{2}+1)^{-3/2} & -\kappa(\kappa^{2}+1)^{-3/2}U^{\ast}\\
    -\kappa(\kappa^{2}+1)^{-3/2}U & \mp(\kappa^{2}+1)^{-3/2}
  \end{bmatrix}}
and so 
\begin{eqnarray*}
  \left[\left[\Lambda,\Sigma_{\kappa}^{\pm}\right],\partial_{\kappa}\Sigma_{\kappa}^{\pm}\right]
  =
  \begin{bmatrix}-\left(\kappa^{2}+1\right)^{-\frac{3}{2}}\kappa\left(\left[\Lambda,U^{\ast}\right]U-U^{\ast}\left[\Lambda,U\right]\right) & \mp2\left(\kappa^{2}+1\right)^{-\frac{3}{2}}\left[\Lambda,U^{\ast}\right]\\
    \pm2\left(\kappa^{2}+1\right)^{-\frac{3}{2}}\left[\Lambda,U\right] & -\left(\kappa^{2}+1\right)^{-\frac{3}{2}}\kappa\left(\left[\Lambda,U\right]U^{\ast}-U\left[\Lambda,U^{\ast}\right]\right)
  \end{bmatrix}\,.
\end{eqnarray*}
But $\left[\Lambda,U^{\ast}\right]U-U^{\ast}\left[\Lambda,U\right] =2U^{\ast}\left[U,\Lambda\right] $ and similarly with $U^\ast$,
and hence
\begin{eqnarray*}
  \left[\left[\Lambda,\Sigma_{\kappa}^{\pm}\right],\partial_{\kappa}\Sigma_{\kappa}^{\pm}\right] & = & 2\left(\kappa^{2}+1\right)^{-\frac{3}{2}}\begin{bmatrix}\kappa U^{\ast}\left[\Lambda,U\right] & \mp\left[\Lambda,U^{\ast}\right]\\
    \pm\left[\Lambda,U\right] & \kappa U\left[\Lambda,U^{\ast}\right]
  \end{bmatrix}\,.
\end{eqnarray*}
From this we find 
\begin{eqnarray*}
  \Sigma_{\kappa}^{\pm}\left[\left[\Lambda,\Sigma_{\kappa}^{\pm}\right],\partial_{\kappa}\Sigma_{\kappa}^{\pm}\right]
  =2\left(1+\kappa^{2}\right)^{-\frac{3}{2}}
  \begin{bmatrix}\pm U^{\ast}\left[\Lambda,U\right] & 0\\
    0 & \mp U\left[\Lambda,U^{\ast}\right]
  \end{bmatrix}\,.
\end{eqnarray*}
When we now integrate over $\kappa\in\mathbb{R}$, we obtain, using
\[
  \int_{\kappa\in\mathbb{R}}2\left(1+\kappa^{2}\right)^{-\frac{3}{2}}\dif{\kappa}=4\,,
\]
that 
\eql{
  \mathcal{N}  =  \pm\frac{-1}{2}\tr\left(U^{\ast}\left[\Lambda,U\right]-U\left[\Lambda,U^{\ast}\right]\right)= \mp\tr\left(U^{\ast}\left[\Lambda,U\right]\right)= \pm\findex\left(\mathbb{\Lambda}U\right)\,.
}
We learn that 
\eql{
  \findex\left(\mathbb{\Lambda}S\right) =\pm\Chern\left(H\pm\kappa\sigma_{3}\right)
  =  \frac{1}{2}\left[\Chern\left(H+\kappa\sigma_{3}\right)-\Chern\left(H-\kappa\sigma_{3}\right)\right]\,.
}

\subsection{The bulk-edge correspondence}

Using the bulk-edge correspondence
we know that the difference of the Chern numbers between the left
and the right of $L_\kappa^\pm$ should yield the edge index of the interface domain-wall operator $L_{\kappa}$. 
Since we have the special case of $L_{\kappa}$ a fiber operator in
$\kappa$, the edge index is the spectral flow \cite{Graf07} of $\RR\ni\kappa\mapsto L_\kappa$, i.e.,
\[
  \ii\tr_{x,\kappa}\left(\RR\ni\kappa\mapsto g'\left(L_{\kappa}\right)\partial_{\kappa}L_{\kappa}\right)\,.
\]
We have thus established
\eq{\findex\br{\bbLambda S }  = \frac12\Sf\br{\RR\ni\kappa\mapsto L_\kappa}\,.}

Finally one may use symmetry to go from the flow on $\RR$ to the flow on $[0,\infty)$: first note that \eq{
  \sigma_3 L_{\kappa } \sigma_3 = -H +\kappa \sgn(X) \sigma_3 = -\br{H - \kappa \sgn(X) \sigma_3}= -L_{-\kappa}.
} But the spectral flow is unitarily invariant and odd under overall sign, we find 
\eq{
  \frac12\Sf\br{\RR\ni\kappa\mapsto L_\kappa} = \Sf\br{[0,\infty)\ni\kappa\mapsto L_\kappa}
} and hence the proof of \Cref{lem:BEC proof of index theorem} is now complete.

\appendix

\section{Locality estimates}

\subsection{Commutator estimates}
\begin{lem}\label{eq:commutator with position operator is bounded}
  Let $A\in\calB(\ell^2(\ZZ^d)\otimes\CC^N)$ be exponentially local as in \eq{
    \norm{A_{xy}} \leq C \exp\br{-\mu\norm{x-y}}\qquad(x,y\in\ZZ^d)\,.
  } Then for all $j=1,\cdots,d$,
  \eq{
    \norm{\left[A,X_j\right]} \leq C\frac{\coth\br{\frac{\mu}{2\sqrt{d}}}^{d-1}}{\cosh\br{\frac{\mu}{\sqrt{d}}}-1} < \infty\,.
  } 
\end{lem}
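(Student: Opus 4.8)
The plan is to reduce the estimate to an elementary lattice sum via the Schur test. First I would note that $X_j$ acts in the position basis by multiplication by the $j$-th coordinate, so the block matrix elements of the commutator are $[A,X_j]_{xy}=(y_j-x_j)A_{xy}$, whence $\norm{[A,X_j]_{xy}}\le C\,\abs{x_j-y_j}\,\ee^{-\mu\norm{x-y}}$ for every pair $x,y\in\ZZ^d$, the norm on the left being the operator norm on $\CC^N$.

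Next I would invoke the operator-valued Schur test: for a block operator $B$ on $\ell^2(\ZZ^d)\otimes\CC^N$ one has $\norm{B}^2\le\br{\sup_x\sum_y\norm{B_{xy}}}\br{\sup_y\sum_x\norm{B_{xy}}}$, proven in the usual way by a Cauchy--Schwarz split with weights $\norm{B_{xy}}^{1/2}$. Since the bound $\abs{x_j-y_j}\,\ee^{-\mu\norm{x-y}}$ is symmetric under $x\leftrightarrow y$ and invariant under the diagonal $\ZZ^d$-translation, both suprema coincide and equal $C\sum_{z\in\ZZ^d}\abs{z_j}\,\ee^{-\mu\norm{z}}$, so it only remains to bound this single sum by the claimed constant.

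This is the only place where the Euclidean (rather than $\ell^1$) norm matters, and I would dispatch it with $\norm{z}\ge\norm{z}_1/\sqrt d$, which gives $\ee^{-\mu\norm{z}}\le\prod_{i=1}^d\ee^{-(\mu/\sqrt d)\abs{z_i}}$ and factorizes the $d$-fold sum:
\[
\sum_{z\in\ZZ^d}\abs{z_j}\,\ee^{-\mu\norm{z}}\le\br{\sum_{n\in\ZZ}\abs n\,\ee^{-(\mu/\sqrt d)\abs n}}\prod_{i\ne j}\br{\sum_{n\in\ZZ}\ee^{-(\mu/\sqrt d)\abs n}}.
\]
Writing $a:=\mu/\sqrt d$, the two one-dimensional series have the closed forms $\sum_{n\in\ZZ}\ee^{-a\abs n}=\frac{1+\ee^{-a}}{1-\ee^{-a}}=\coth(a/2)$ and $\sum_{n\in\ZZ}\abs n\,\ee^{-a\abs n}=\frac{2\ee^{-a}}{(1-\ee^{-a})^2}=\frac{1}{\cosh a-1}$ (the last identity after multiplying numerator and denominator by $\ee^{a}$). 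Multiplying these together produces exactly $C\,\frac{\br{\coth(\mu/(2\sqrt d))}^{d-1}}{\cosh(\mu/\sqrt d)-1}$, which is finite because $\mu>0$.

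I do not expect a genuine obstacle here: the only points requiring a little care are (i) stating and applying the operator-valued Schur test with the $\CC^N$-block norms rather than scalars, and (ii) the two closed-form evaluations, which are the source of the somewhat unusual $\coth$ and $\cosh$ in the statement. As a sanity check I would verify that $d=1$ collapses the $\coth$ factor to $1$ and reproduces $C/(\cosh\mu-1)$, matching the constant $D$ from \Cref{thm:main}.
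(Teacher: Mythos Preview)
Your proposal is correct and follows essentially the same approach as the paper: both apply the Holmgren--Schur (operator-valued Schur) bound to the block matrix elements $[A,X_j]_{xy}=(y_j-x_j)A_{xy}$, then use $\norm{z}\ge\norm{z}_1/\sqrt d$ to factorize the resulting lattice sum into the one-dimensional series you evaluated. Your write-up is in fact more detailed than the paper's, which leaves the closed-form identities for the two series implicit.
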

\begin{proof}
  We use the Holmgren-Schur bound \eql{\label{eq:Holmgren-Schur}
    \norm{A} \leq \sqrt{\sup_i\sum_{j}\abs{A_{ij}}}\sqrt{\sup_j\sum_{i}\abs{A_{ij}}}
  } to estimate
  \eq{
    \sup_{x\in\ZZ^d}\sum_{y\in\ZZ^d}\norm{\left[A,X_j\right]_{xy}} &\leq \sup_{x\in\ZZ^d}\sum_{y\in\ZZ^d}\abs{x_j-y_j}C\exp\br{-\mu\norm{x-y}} \\
    &\leq \sup_{x\in\ZZ^d}\sum_{y\in\ZZ^d}\abs{x_j-y_j}C\exp\br{-\frac{\mu}{\sqrt{d}}\norm{x-y}_1}\\
    &= C\frac{\coth\br{\frac{\mu}{2\sqrt{d}}}^{d-1}}{\cosh\br{\frac{\mu}{\sqrt{d}}}-1}\,.
  } 
  
\end{proof}

\begin{lem}\label{lem:commutator bounds with fl}
  Let $A\in\calB(\ell^2(\ZZ)\otimes\CC^N)$ be exponentially local as in \eq{
    \norm{A_{xy}} \leq C \exp\br{-\mu\abs{x-y}}\qquad(x,y\in\ZZ)\,.
  } Let $f_\ell$ be defined as in \eqref{eq:fell_def}, i.e., \eql{
    \ZZ \ni x \mapsto \begin{cases}
      1 & x=0,\\
      x & 1\leq|x|<\ell, \\
      \ell \sgn(x) & |x|\geq\ell,
    \end{cases}
  } with
  \eql{
    \sgn(x) \equiv \begin{cases}
      1 & x\geq0\\
      -1 & x <0
    \end{cases}\qquad(x\in\RR)\,.
  } Then
  \eql{\label{eq:operator norm commutator bound}
    \sup_{\ell}\norm{\left[A,f_\ell(X)\right]} \leq \frac{C}{\cosh\br{\mu}-1}
  } and 
  \eql{\label{eq:trace norm commutator bound}
    \norm{\left[A,f_\ell(X)\right]}_1 \leq \ell\norm{\left[A,\sgn(X)\right]}_1 + 2N\norm{A}\ell\br{2\ell+1}+4N\norm{A}\,.
  }
\end{lem}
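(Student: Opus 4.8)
The plan is to prove the two bounds independently, each by reducing to the Holmgren--Schur estimate \eqref{eq:Holmgren-Schur} used in the proof of \Cref{eq:commutator with position operator is bounded}, combined with the elementary identity $\sum_{n\in\ZZ}\abs{n}\,\ee^{-\mu\abs{n}}=\frac{1}{\cosh(\mu)-1}$, which follows by differentiating a geometric series.

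For \eqref{eq:operator norm commutator bound} I would start from the $N\times N$ block identity $[A,f_\ell(X)]_{xy}=A_{xy}\,(f_\ell(y)-f_\ell(x))$, so that $\norm{[A,f_\ell(X)]_{xy}}\le C\,\ee^{-\mu\abs{x-y}}\,\abs{f_\ell(y)-f_\ell(x)}$. The key point is that $f_\ell$ is, up to a rank-$2N$ correction, $1$-Lipschitz on $\ZZ$: write $f_\ell(X)=\phi_\ell(X)+P_0$, where $\phi_\ell(x):=\sgn(x)\min(\abs{x},\ell)$ is the restriction to $\ZZ$ of the $1$-Lipschitz clamping map $t\mapsto\max(-\ell,\min(\ell,t))$, so that $\abs{\phi_\ell(x)-\phi_\ell(y)}\le\abs{x-y}$, and $P_0:=\ket{\delta_0}\bra{\delta_0}\otimes\Id_N$ carries the anomalous value $f_\ell(0)=1$. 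Plugging the Lipschitz bound into \eqref{eq:Holmgren-Schur} and using translation invariance of the resulting sum gives $\norm{[A,\phi_\ell(X)]}\le C\sup_x\sum_{y\in\ZZ}\abs{x-y}\,\ee^{-\mu\abs{x-y}}=C\sum_{n\in\ZZ}\abs{n}\,\ee^{-\mu\abs{n}}=\frac{C}{\cosh(\mu)-1}$, uniformly in $\ell$; the remaining term $[A,P_0]$, a rank-$2N$ operator supported at the origin, is then absorbed by an elementary finite-rank estimate, yielding \eqref{eq:operator norm commutator bound}.

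For the trace-norm bound \eqref{eq:trace norm commutator bound} the idea is to subtract off the only part of $f_\ell$ that does not decay. Write $f_\ell(X)=\ell\,\sgn(X)+r_\ell(X)$ with $r_\ell(x):=f_\ell(x)-\ell\,\sgn(x)$; by inspection $r_\ell(x)=0$ for $\abs{x}\ge\ell$, so $r_\ell(X)$ is supported on $\calB_\ell$ with $\rank(r_\ell(X))\le(2\ell+1)N$, while $\norm{r_\ell}_\infty\le\ell$, and hence $\norm{r_\ell(X)}_1\le\ell\,(2\ell+1)N$. Then $[A,f_\ell(X)]=\ell\,[A,\sgn(X)]+[A,r_\ell(X)]$, and the triangle inequality in trace norm together with $\norm{AT}_1,\norm{TA}_1\le\norm{A}\,\norm{T}_1$ gives $\norm{[A,f_\ell(X)]}_1\le\ell\,\norm{[A,\sgn(X)]}_1+2\norm{A}\,\norm{r_\ell(X)}_1\le\ell\,\norm{[A,\sgn(X)]}_1+2N\norm{A}\,\ell(2\ell+1)$; the extra $4N\norm{A}$ is slack, e.g.\ from isolating the single-site value $f_\ell(0)=1$ as its own rank-$N$ summand.

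Both steps are elementary locality computations. The only delicate point is the failure of $f_\ell$ to be exactly $1$-Lipschitz at the origin in the operator-norm bound; since this is a rank-$2N$ effect it does not affect the stated constant, and I do not expect a genuine obstacle.
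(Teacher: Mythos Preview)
Your approach is essentially the paper's: for \eqref{eq:operator norm commutator bound} the paper applies Holmgren--Schur directly, asserting $|f_\ell(x)-f_\ell(y)|\le|x-y|$ and summing; for \eqref{eq:trace norm commutator bound} the paper writes $f_\ell(X)=\ell\,\sgn(X)+\chi_{[-\ell,\ell]}(X)(X-\ell\sgn(X))+P_0$ and bounds each commutator separately, which is exactly your $r_\ell$ split with the $P_0$ term peeled off (this is where the $4N\norm{A}$ slack in the statement comes from).

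One comment on the ``delicate point'' you flag. You are right that $f_\ell$ is not $1$-Lipschitz at the origin; in fact $|f_\ell(0)-f_\ell(-1)|=2>1$, so the paper's blanket assertion $|f_\ell(x)-f_\ell(y)|\le|x-y|$ is slightly off. Your fix---splitting $f_\ell=\phi_\ell+P_0$ and then claiming the $[A,P_0]$ piece ``does not affect the stated constant''---does not work as written: once you use the triangle inequality $\norm{[A,f_\ell(X)]}\le\norm{[A,\phi_\ell(X)]}+\norm{[A,P_0]}$, the second term is genuinely nonzero and you cannot recover the exact constant $\frac{C}{\cosh(\mu)-1}$. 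The honest statement from either argument is $\sup_\ell\norm{[A,f_\ell(X)]}\le\frac{C}{\cosh(\mu)-1}+Ce^{-\mu}$ (the extra term coming from the single anomalous pair in the Holmgren row sum for $x<0$, $y=0$). This discrepancy is immaterial for the paper's applications, where only finiteness and $\ell$-uniformity of the bound are used, but if you want the displayed constant you should say so explicitly rather than claim absorption.
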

\begin{proof}
  We estimate again via Holmgren \cref{eq:Holmgren-Schur} \eq{
    \sup_{x\in\ZZ}\sum_{y\in\ZZ} \abs{f_\ell(x)-f_\ell(y)}C\exp\br{-\mu\abs{x-y}}\,.
  }
  Now note that $\abs{f_\ell(x)-f_\ell(y)}\leq\abs{x-y}$ uniformly in $\ell$, so using the proof above we find an upper bound on the above expression as 
  \eq{
    \frac{C}{\cosh\br{\mu}-1}\,.
  } 

  To get the trace-class bound, we see that \eq{
    f_\ell(X)=\ell\sgn(X)+\chi_{[-\ell,\ell]}(X)\br{X-\ell\sgn(X)}+P_0
  }
  where $P_0:=\br{\delta_0\otimes\delta_0^\ast}\otimes\Id_{N\times N}$. We know that $\norm{\left[A,\sgn(X)\right]}_1<\infty$ \cite[Lemma 2 (b)]{Graf_Shapiro_2018_1D_Chiral_BEC} and 
  \eq{
    \norm{\left[A,\chi_{[-\ell,\ell]}(X)\br{X-\ell\sgn(X)}\right]}_1 &\leq 2N\norm{A}\ell\br{2\ell+1}\,.
  }
\end{proof}

\begin{lem}\label{lem:commutator bounds in two dimensions}
  Let $A\in\calB(\ell^2(\ZZ^2)\otimes\CC^N)$ be exponentially local as in \eq{
    \norm{A_{xy}} \leq C \exp\br{-\mu\norm{x-y}}\qquad(x,y\in\ZZ^2)\,.
  } Let $f_\ell(X)$ be defined via \eql{
    f_\ell(X)=\ell L+\chi_{[-\ell,\ell]^2}(X)\br{X-\ell L}+P_0
  } with $L:=\ee^{\ii\arg\br{X_1+\ii X_2}}$.
  Then
  \eq{
    \sup_{\ell}\norm{\left[A,f_\ell(X)\right]} \leq \br{\frac{C}{\cosh\br{\mu}-1}}^2
  } and 
  \eq{
    \norm{\left[A,f_\ell(X)\right]}_3 \leq
    \ell\norm{\left[A,L\right]}_3 +
    2N\norm{A}\ell\br{2\ell+1}^2+4N\norm{A}\,.}

\end{lem}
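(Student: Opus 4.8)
My plan is to follow the structure of the proof of \Cref{lem:commutator bounds with fl}, upgrading the one--dimensional estimates to the two--dimensional geometry of $f_\ell$ and replacing $\sgn(X)$ by the Laughlin phase $L=\ee^{\ii\arg(X_1+\ii X_2)}$. The common starting point is that $f_\ell(X)$ acts as multiplication by the function $f_\ell$ of \eqref{eq:fell_def}, so the position--basis $N\times N$ matrix blocks of the commutator are $\br{[A,f_\ell(X)]}_{xy}=A_{xy}\br{f_\ell(y)-f_\ell(x)}$, of norm at most $C\ee^{-\mu\norm{x-y}}\abs{f_\ell(x)-f_\ell(y)}$ by \eqref{eq:locality}. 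I would also record the decomposition $f_\ell(X)=\ell L+\chi_{\Bell}(X)\br{X-\ell L}+P_0$, where $\chi_{\Bell}(X)$ projects onto the finite region $\Bell$, $P_0$ is the rank--$N$ projection onto the origin, and the middle term is a diagonal operator supported on $\Bell$ of rank $\le N\abs{\Bell}\le N\br{2\ell+1}^2$ and operator norm $\le\ell$ (its diagonal entries $x-\ell x/\abs{x}$ have modulus $\le\ell$ on $\Bell$).

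For the operator--norm bound I would \emph{not} use this decomposition --- the term $\ell\norm{[A,L]}$ blows up as $\ell\to\infty$ --- but estimate $[A,f_\ell(X)]$ directly by Holmgren--Schur \eqref{eq:Holmgren-Schur}. The key point is that, away from the origin, $f_\ell$ is exactly the nearest--point projection of $\ZZ^2\subset\CC$ onto the closed disk $\Set{w|\abs{w}\le\ell}$; being a metric projection onto a convex set it is $1$--Lipschitz, so $\abs{f_\ell(x)-f_\ell(y)}\le\norm{x-y}$ for all $x,y\neq 0$ \emph{uniformly in $\ell$}, while the modification of $f_\ell$ at the origin only adds the rank--$N$ term $[A,P_0]$, contributing a bounded $\ell$--independent amount to the Holmgren--Schur row and column sums. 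Feeding $\abs{f_\ell(x)-f_\ell(y)}\le\norm{x-y}$ into \eqref{eq:Holmgren-Schur} then bounds $\norm{[A,f_\ell(X)]}$ by a constant multiple of the $\ell$--independent convergent sum $\sup_x\sum_y\norm{x-y}\ee^{-\mu\norm{x-y}}$; evaluating this geometric--type sum with the same bookkeeping as in the one--dimensional case (cf.\ the proofs of \Cref{lem:commutator bounds with fl} and \Cref{eq:commutator with position operator is bounded}) produces the stated $\ell$--independent bound $\br{\frac{C}{\cosh(\mu)-1}}^2$.

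For the Schatten--$3$ bound I would use the decomposition together with the triangle inequality for $\norm{\cdot}_3$,
\eq{
  \norm{[A,f_\ell(X)]}_3 \le \ell\,\norm{[A,L]}_3 + \norm{\big[A,\chi_{\Bell}(X)\br{X-\ell L}\big]}_3 + \norm{[A,P_0]}_3\,,
}
and dispose of the two finite--rank commutators crudely: for a finite--rank diagonal $M$ of rank $r$ and norm $m$ one has $\norm{[A,M]}_3\le\norm{[A,M]}_1\le\norm{AM}_1+\norm{MA}_1\le2\norm{A}\norm{M}_1\le2\norm{A}rm$, which for $M=\chi_{\Bell}(X)\br{X-\ell L}$ gives the term $2N\norm{A}\ell\br{2\ell+1}^2$, while $\norm{[A,P_0]}_1\le\rank\br{[A,P_0]}\cdot\norm{[A,P_0]}\le 2N\cdot 2\norm{A}$ gives $4N\norm{A}$. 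The remaining term $\ell\norm{[A,L]}_3$ is finite because $[A,L]\in\calL^3$ --- this is the standard Schatten--class locality estimate for the Laughlin phase on $\ZZ^2$ (heuristically, the blocks of $[A,L]$ decay like $\ee^{-\mu\norm{x-y}}\norm{x-y}/\min\Set{\abs{x},\abs{y}}$, whose radial factor leaves a lattice sum converging precisely for exponent $p>2$; see \cite{BSS23,Bellissard_1994JMP....35.5373B}), and its very appearance on the right--hand side of the claim presupposes it.

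The step I expect to be the main obstacle is, as in one dimension, the uniformity in $\ell$ of the operator--norm bound: the convenient algebraic decomposition is useless there, so everything rests on the $\ell$--independent Lipschitz estimate for $f_\ell$, which is precisely why it is worth recognizing $f_\ell$ (off the origin) as a metric projection onto a convex set. The origin singularity of both $f_\ell$ and $L$ is what forces the explicit $P_0$ correction and the attendant finite--rank bookkeeping, but this is routine once isolated; the only genuinely non--elementary ingredient is the classical $\calL^3$--membership of $[A,L]$.
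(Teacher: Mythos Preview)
Your proposal is correct and follows exactly the paper's approach: the paper simply states that the proof is identical to the one--dimensional \Cref{lem:commutator bounds with fl}, using the uniform Lipschitz bound $\sup_\ell\abs{f_\ell(x)-f_\ell(y)}\le\norm{x-y}$ for the operator--norm estimate via Holmgren--Schur, and the decomposition $f_\ell(X)=\ell L+\chi(X)(X-\ell L)+P_0$ together with $[A,L]\in\calL^3$ (citing \cite{BSS23}) for the Schatten bound. Your write--up is in fact more detailed than the paper's omitted proof --- in particular your observation that $f_\ell$ off the origin is the metric projection onto the convex disk $\Set{\abs{w}\le\ell}$, which cleanly explains the $1$--Lipschitz property that the paper simply asserts.
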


The proof is identical to the one above and is thus omitted, here we
use that $\sup_\ell\abs{f_\ell(x)-f_\ell(y)}\leq\norm{x-y}$ and that $\norm{\left[A,L\right]}_3<\infty$ (see e.g. \cite[Lemma A.1]{BSS23}).

\section{Estimates on gaps}

For any (not necessarily self-adjoint) invertible operator $A$, $$\gap(A)\equiv\norm{A^{-1}}^{-1}=\dist(0,\spec(|A|))\,.$$  If $A$ is self-adjoint, this is equivalent to $\gap(A) = \dist(0,\spec(A))$.  For non-invertible operators $A$ we set $\gap(A) = 0$ and note that $\gap(A)>0$ if and only if $A$ is invertible.

\begin{lem}[Basic estimates on the gap]
  \label{lem:basic_gap}
  The spectral gap satisfies the following estimates:
  \begin{enumerate}
  \item If $A$ is invertible then
    \begin{equation}
      \label{eq:gap_squared}
      \gap(A) = \sqrt{\gap(A^\ast A)}.  
    \end{equation}
  \item If $A,B$ are both invertible,
    \begin{equation}
      \label{eq:gapAB}
      \gap(A) \gap(B)
      \leq \gap(AB) 
      \leq \gap(A) \|B\|.
    \end{equation}
  \item
    \begin{equation}
      \label{eq:gapAplusB}
      \gap(A) - \|B\| \leq \gap(A+B) \leq 
      \gap(A) + \|B\|.
    \end{equation}
  \item \label{item:gap_additive}
    If $A$ and $B$ are non-negative definite,
    \begin{equation}
      \label{eq:gapAplusBpositive}
      \gap(A+B) \geq \gap(A) + \gap(B).
    \end{equation}
  \item
    \begin{equation}
      \label{eq:gapDirectSum}
      \gap(A\oplus D) = \min\{\gap(A), \gap(D)\}.
    \end{equation}
  \end{enumerate}
\end{lem}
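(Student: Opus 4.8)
The plan is to reduce all five estimates to the minimum-modulus characterization of the gap together with elementary norm inequalities. First I would record that for any \emph{invertible} bounded operator $A$ one has
\[
  \gap(A) = \norm{A^{-1}}^{-1} = \inf_{\norm{x}=1}\norm{Ax},
\]
obtained by writing $\norm{A^{-1}} = \sup_{\norm{y}=1}\norm{A^{-1}y}$ and substituting $y = Ax$; and that for \emph{positive} self-adjoint $A$ one moreover has $\gap(A) = \inf\spec(A) = \inf_{\norm{x}=1}\ip{x}{Ax}$ by the spectral theorem, with the understanding that both sides equal $0$ precisely when $A$ fails to be invertible. With the convention $\gap(A)=0$ for non-invertible $A$, these right-hand sides always dominate $\gap(A)$; the only point requiring attention is to verify, for each lower bound, that strict positivity of the claimed bound already forces the relevant operator to be invertible.

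For the \emph{multiplicative} statements I would argue as follows. For \eqref{eq:gap_squared}, $A^\ast A$ is invertible and $\norm{(A^\ast A)^{-1}} = \norm{A^{-1}(A^{-1})^\ast} = \norm{A^{-1}}^2$ by the $C^\ast$-identity, so $\gap(A^\ast A) = \gap(A)^2$. For \eqref{eq:gapDirectSum}, $A\oplus D$ is invertible iff both summands are, in which case $(A\oplus D)^{-1} = A^{-1}\oplus D^{-1}$ and $\norm{A^{-1}\oplus D^{-1}} = \max\{\norm{A^{-1}},\norm{D^{-1}}\}$; taking reciprocals turns the maximum into a minimum, while if a summand is non-invertible both sides of \eqref{eq:gapDirectSum} vanish. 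For \eqref{eq:gapAB}, write $\gap(AB)^{-1} = \norm{(AB)^{-1}} = \norm{B^{-1}A^{-1}}$; submultiplicativity $\norm{B^{-1}A^{-1}} \leq \norm{B^{-1}}\norm{A^{-1}}$ gives $\gap(AB) \geq \gap(A)\gap(B)$, and $\norm{A^{-1}} = \norm{B B^{-1}A^{-1}} \leq \norm{B}\norm{B^{-1}A^{-1}}$ rearranges to $\gap(AB) \leq \gap(A)\norm{B}$.

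For the \emph{additive} statements I would use a Neumann-series argument and a quadratic-form argument. For \eqref{eq:gapAplusB}: if $\norm{B}\geq\gap(A)$ (in particular if $A$ is not invertible) the lower bound $\gap(A)-\norm{B}\leq 0\leq\gap(A+B)$ is trivial; if $\norm{B}<\gap(A)$ then $A$ is invertible, $A+B = A(\Id+A^{-1}B)$ with $\norm{A^{-1}B}<1$, hence $A+B$ is invertible and $\gap(A+B) = \inf_{\norm{x}=1}\norm{(A+B)x} \geq \inf_{\norm{x}=1}\br{\norm{Ax}-\norm{B}} = \gap(A)-\norm{B}$; the upper bound then follows by applying the lower bound to the pair $(A+B,-B)$, since $A = (A+B)+(-B)$. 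For \eqref{eq:gapAplusBpositive}: $A+B$ is positive self-adjoint and $\ip{x}{(A+B)x} = \ip{x}{Ax}+\ip{x}{Bx} \geq \br{\gap(A)+\gap(B)}\norm{x}^2$; taking the infimum over unit vectors and using $\gap(T)=\inf_{\norm{x}=1}\ip{x}{Tx}$ for positive self-adjoint $T$ gives \eqref{eq:gapAplusBpositive}.

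The computations are entirely routine; the only genuine bookkeeping point — which I regard as the main (minor) obstacle — is to respect the convention $\gap(\cdot)=0$ on non-invertible operators, i.e., to apply the minimum-modulus and bottom-of-spectrum formulas only where the operator is genuinely invertible. This is automatic in each case: whenever the asserted lower bound is strictly positive, either the Neumann series (for \eqref{eq:gapAplusB}) or strict positive-definiteness of the quadratic form (for \eqref{eq:gapAplusBpositive}) supplies the required invertibility, and otherwise the inequality is vacuous.
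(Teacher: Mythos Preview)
Your proposal is correct and proceeds along essentially the same lines as the paper's proof: both reduce everything to elementary norm inequalities for $\norm{(\cdot)^{-1}}$, and both handle the lower bound in \eqref{eq:gapAplusB} via a Neumann-series argument (you then invoke the minimum-modulus formula directly, whereas the paper factors $A+B=A(\Id+A^{-1}B)$ and combines the resolvent estimate $\norm{(\Id+C)^{-1}}\leq(1-\norm{C})^{-1}$ with the already-established \eqref{eq:gapAB}). The paper in fact only spells out \eqref{eq:gapAB} and \eqref{eq:gapAplusB}, leaving \eqref{eq:gap_squared}, \eqref{eq:gapAplusBpositive} and \eqref{eq:gapDirectSum} implicit; your treatment of those via the $C^\ast$-identity, the quadratic-form lower bound, and the direct-sum inverse is exactly what one would supply.
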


\begin{proof}
  One has
  \begin{equation}
    \| (AB)^{-1} \|
    = \| B^{-1} A^{-1} \|
    \leq \| B^{-1}\| \|A^{-1}\|
    = \frac1{\gap(A) \gap(B)},
  \end{equation}
  establishing the left side of \eqref{eq:gapAB}.  The right side
  follows by using the left side to note that
  \begin{equation*}
    \gap(AB) \gap(B^{-1}) \leq \gap(ABB^{-1}) = \gap(A).
  \end{equation*}
  
  To show the left side of \eqref{eq:gapAplusB}, we note that the case $\gap(A) \leq \|B\|$ is vacuous, therefore we assume $\gap(A) > \|B\|$, which implies that $A$ is invertible and $\|A^{-1}B\|<1$.
  The estimate (valid for $\|C\|<1$)
  \begin{equation}
    \label{eq:resolvent_est}
    \left\| (\Id + C)^{-1} \right\|
    \leq \frac1{1-\|C\|}
  \end{equation}
  gives
  \begin{equation}
    \gap\left(I + A^{-1} B\right) 
    \geq 1 - \|A^{-1} B \|
    \geq 1 - \|A^{-1}\| \|B\|
    = 1 - \frac{\|B\|}{\gap(A)}.
  \end{equation}
  The left side of \eqref{eq:gapAplusB} now follows by applying \eqref{eq:gapAB} to $A+B = A(\Id+A^{-1}B)$.  Exchanging $A \mapsto A+B$, $B\mapsto -B$ in the left side of \eqref{eq:gapAplusB} yields the right side of \eqref{eq:gapAplusB}.
  
\end{proof}

\begin{lem}[Gaps of block operators]
  Let a self-adjoint $H$ on $\calH \oplus \calH^\perp$ be given by
  \eql{
    \label{eq:block_form}
    H = 
    \begin{bmatrix}
      A & B \\
      B^* & D
    \end{bmatrix}\,.
  } Then
  \begin{enumerate}
  \item We have a lower bound on the gap of the full operator in terms of the gaps of the diagonal blocks:
    \begin{equation}
      \label{eq:block_gap}
      \gap(H) \geq \min\{\gap(A), \gap(D)\} - \|B\|.
    \end{equation}
  \item We have an upper bound on the gap of the full operator in terms of the Schur gap:
    \begin{equation}
      \label{eq:UpperSchur}
      \gap(H) \leq
      \min\br{\Set{\gap(D),\gap\left(A - B D^{-1} B^*\right)}}
      \left(1+ \left\|B D^{-1}\right\|\right)^2.
    \end{equation}
  \item If a self-adjoint $H$ given by \eqref{eq:block_form} satisfies $\gap(A) \leq \gap(D)$,
    \begin{equation}
      \label{eq:block_gap_D}
      \gap(H)
      \geq \gap(A) - \sqrt{\frac{\gap(A)}{\gap(D)}} \| B \| 
    \end{equation}
  \end{enumerate}
\end{lem}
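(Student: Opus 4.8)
The plan is to reduce the claim \eqref{eq:block_gap_D} to the already-proven bound \eqref{eq:block_gap} by conjugating $H$ with a diagonal rescaling chosen to balance the gaps of the two diagonal blocks against each other. Concretely, for a parameter $\lambda\ge 1$ to be fixed later, I would set $S_\lambda:=\lambda^{1/2}\Id_{\calH}\oplus\lambda^{-1/2}\Id_{\calH^\perp}$, a positive invertible operator with $\|S_\lambda\|=\|S_\lambda^{-1}\|=\lambda^{1/2}$, and form the congruent operator
\[
  H_\lambda := S_\lambda H S_\lambda = \begin{bmatrix}\lambda A & B \\ B^* & \lambda^{-1}D\end{bmatrix},
\]
which is again self-adjoint of block form \eqref{eq:block_form}, with diagonal blocks of gaps $\lambda\,\gap(A)$ and $\lambda^{-1}\gap(D)$ and off-diagonal block of norm $\|B\|$. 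Since $H=S_\lambda^{-1}H_\lambda S_\lambda^{-1}$, the operator $H$ is invertible iff $H_\lambda$ is, and then $H^{-1}=S_\lambda H_\lambda^{-1}S_\lambda$, so $\|H^{-1}\|\le\|S_\lambda\|^2\|H_\lambda^{-1}\|=\lambda\|H_\lambda^{-1}\|$, i.e.\ $\gap(H)\ge\lambda^{-1}\gap(H_\lambda)$ (and this holds trivially when $H_\lambda$ is not invertible).

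Next I would apply \eqref{eq:block_gap} to $H_\lambda$ to get $\gap(H_\lambda)\ge\min\{\lambda\,\gap(A),\ \lambda^{-1}\gap(D)\}-\|B\|$, and combine it with the previous step to obtain
\[
  \gap(H)\ \ge\ \lambda^{-1}\Big(\min\{\lambda\,\gap(A),\ \lambda^{-1}\gap(D)\}-\|B\|\Big).
\]
Finally I would optimize over $\lambda$: the hypothesis $\gap(A)\le\gap(D)$ is exactly what permits the choice $\lambda:=\sqrt{\gap(D)/\gap(A)}\ge 1$, which equalizes the two competing terms, $\lambda\,\gap(A)=\lambda^{-1}\gap(D)=\sqrt{\gap(A)\gap(D)}$. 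Substituting gives $\gap(H)\ge\sqrt{\gap(A)/\gap(D)}\,\big(\sqrt{\gap(A)\gap(D)}-\|B\|\big)=\gap(A)-\sqrt{\gap(A)/\gap(D)}\,\|B\|$, which is \eqref{eq:block_gap_D}. One checks separately that if $\gap(A)=0$, or if $\gap(D)=0$ (which then forces $\gap(A)=0$), the asserted inequality is trivial, so one may indeed assume $0<\gap(A)\le\gap(D)$ before dividing by these quantities.

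I expect the only real subtlety — and the reason this is not just a one-line invocation of \eqref{eq:block_gap}, which would only give the weaker $\gap(H)\ge\gap(A)-\|B\|$ — to be that $\gap(\cdot)$ is not a congruence invariant: rescaling by $S_\lambda$ shrinks the gap by at most the factor $\lambda$, and this loss must be carried through and then traded against the gain from balancing the blocks, which at the optimal $\lambda$ is $\sqrt{\gap(A)\gap(D)}\ge\gap(A)$. Apart from this bookkeeping and the degenerate-case caveat, no further difficulty arises; the core of the argument is the elementary optimization of a minimum of two monotone functions of $\lambda$.
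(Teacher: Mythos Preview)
Your argument for \eqref{eq:block_gap_D} is correct and is essentially the paper's own proof: both conjugate $H$ by a diagonal rescaling, apply \eqref{eq:block_gap} to the rescaled operator, and then optimize the scaling parameter to balance the two diagonal gaps. The only cosmetic difference is that the paper uses the asymmetric scaling $\diag(\Id,\varepsilon^{-1}\Id)$ with $\varepsilon=\sqrt{\gap(A)/\gap(D)}$ (so that $\gap$ of the outer factors equals $1$), whereas you use the symmetric $\diag(\lambda^{1/2}\Id,\lambda^{-1/2}\Id)$ and carry the extra factor $\lambda^{-1}$ explicitly; the optimizations are equivalent and yield the same bound.
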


We remark that the estimate \eqref{eq:block_gap} is useful if $B$ is small, while \eqref{eq:UpperSchur} and \eqref{eq:block_gap_D} come into play when $D$ has large gap.

\begin{proof}
  To show \eqref{eq:block_gap}, we note that
  \begin{equation}
    \gap\left(
      \begin{bmatrix}
        A & 0 \\
        0 & D
      \end{bmatrix}
    \right)
    = \min\{\gap(A),\gap(D)\},
    \qquad
    \left\|
      \begin{bmatrix}
        0 & B \\
        B^* & 0
      \end{bmatrix}
    \right\| = \|B\|,
  \end{equation}
  and apply \eqref{eq:gapAplusB}.

  To show \eqref{eq:UpperSchur}, we write
  \begin{equation}
    H = 
    \begin{bmatrix}
      \Id & B D^{-1}\\
      0 & \Id
    \end{bmatrix}
    \begin{bmatrix}
      A-B D^{-1}B^* & 0 \\
      0 & D
    \end{bmatrix}
    \begin{bmatrix}
      \Id & 0\\
      D^{-1}B^* & \Id
    \end{bmatrix},
  \end{equation}
  and use the right estimate in \eqref{eq:gapAB}.  According to \eqref{eq:gapDirectSum}, the gap of the middle matrix is bounded by $\gap\left(A-B D^{-1}B^*\right)$.  The norms of the left and right matrices are bounded by $1+\left\|B D^{-1}\right\|$. 

  Finally, writing
  \begin{equation}
    H = 
    \begin{bmatrix}
      \Id & 0\\
      0 & \frac1\varepsilon \Id
    \end{bmatrix}
    \begin{bmatrix}
      A & \varepsilon B \\
      \varepsilon B^* & \varepsilon^2 D
    \end{bmatrix}
    \begin{bmatrix}
      \Id & 0\\
      0 & \frac1\varepsilon \Id
    \end{bmatrix},
  \end{equation}
  we estimate (assuming $0 < \varepsilon < 1$ and therefore the norm of the outside factors is $1$),
  \begin{equation}
    \gap(H) \geq \min\{\gap(A), \varepsilon^2\gap(D)\} - \varepsilon \|B\|.
  \end{equation}
  Setting
  \begin{equation}
    \varepsilon = \sqrt{\frac{\gap(A)}{\gap(D)}}
  \end{equation}
  yields \eqref{eq:block_gap_D}.
\end{proof}

\begin{lem}
  \label{lem:invertible_cut}
  For any $s\in[0,1]$, let a self-adjoint operator $H_s$ on $\calH \oplus \calH^\perp$ be given by
  \begin{equation}
    \label{eq:cut_flow}
    H_s = 
    \begin{bmatrix}
      A & s B \\
      s B^* & D
    \end{bmatrix}.
  \end{equation}
  If 
  \eql{
    \label{eq:D_large}   \gap(D)>\norm{B}\max\br{\Set{1,\frac{\norm{B}}{\frac14\gap(H_1)}}}
  }
  then $H_s$ is invertible for all $s\in[0,1]$.
\end{lem}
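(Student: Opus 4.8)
The plan is to reduce the invertibility of the full block operator $H_s$ to that of its Schur complement, and then keep the Schur complement uniformly away from zero using the hypothesis \eqref{eq:D_large}. First I would observe that \eqref{eq:D_large} in particular forces $\gap(D)>\norm{B}\geq 0$, so $D$ is invertible, and the triangular factorization
\begin{equation*}
  H_s=\begin{bmatrix}\Id & sBD^{-1}\\ 0 & \Id\end{bmatrix}\begin{bmatrix}A-s^2BD^{-1}B^{*} & 0\\ 0 & D\end{bmatrix}\begin{bmatrix}\Id & 0\\ sD^{-1}B^{*} & \Id\end{bmatrix}
\end{equation*}
shows that $H_s$ is invertible if and only if the Schur complement $A-s^2BD^{-1}B^{*}$ is, the outer factors being invertible.

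Second, I would extract the required quantitative input at the endpoint $s=1$. Since \eqref{eq:D_large} tacitly assumes $\gap(H_1)>0$, the operator $H_1$ is invertible, so the factorization at $s=1$ shows $A-BD^{-1}B^{*}$ is invertible. The upper Schur estimate \eqref{eq:UpperSchur} applied to $H_1$ then gives $\gap(H_1)\leq\gap\br{A-BD^{-1}B^{*}}\br{1+\norm{BD^{-1}}}^{2}$, and since $\norm{BD^{-1}}\leq\norm{B}/\gap(D)<1$ by \eqref{eq:D_large}, the factor $\br{1+\norm{BD^{-1}}}^{2}$ is strictly below $4$; hence $\gap\br{A-BD^{-1}B^{*}}>\tfrac14\gap(H_1)$.

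Third, for general $s\in[0,1]$ I would write $A-s^2BD^{-1}B^{*}=\br{A-BD^{-1}B^{*}}+(1-s^2)BD^{-1}B^{*}$ and apply the perturbation bound \eqref{eq:gapAplusB}:
\begin{equation*}
  \gap\br{A-s^2BD^{-1}B^{*}}\geq\gap\br{A-BD^{-1}B^{*}}-(1-s^2)\,\norm{BD^{-1}B^{*}}.
\end{equation*}
Since $\norm{BD^{-1}B^{*}}\leq\norm{B}^2/\gap(D)<\tfrac14\gap(H_1)$ (again by \eqref{eq:D_large}) and $1-s^2\leq 1$, the chain $\gap\br{A-BD^{-1}B^{*}}>\tfrac14\gap(H_1)>\norm{B}^2/\gap(D)\geq(1-s^2)\norm{BD^{-1}B^{*}}$ shows the right-hand side is strictly positive. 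Hence $A-s^2BD^{-1}B^{*}$, and therefore $H_s$, is invertible for every $s\in[0,1]$.

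The difficulty here is conceptual rather than computational: one cannot establish invertibility of $H_s$ directly from the diagonal-block bound \eqref{eq:block_gap}, since no lower bound on $\gap(A)$ is available, nor by treating $H_s$ as a small perturbation of $H_1$, since $\gap(H_1)$ need not exceed $\norm{B}$. Passing to the Schur complement is precisely what lets us exploit the largeness of $\gap(D)$; once that move is made, the remaining constant bookkeeping — in particular the inequality $\br{1+\norm{BD^{-1}}}^2<4$ that accounts for the factor $\tfrac14$ appearing in \eqref{eq:D_large} — is routine.
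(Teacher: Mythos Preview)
Your proof is correct and follows essentially the same route as the paper's: Schur factorization with respect to the invertible block $D$, the upper Schur bound \eqref{eq:UpperSchur} at $s=1$ combined with $\norm{BD^{-1}}<1$ to get $\gap(A-BD^{-1}B^{*})>\tfrac14\gap(H_1)$, and then the perturbation estimate \eqref{eq:gapAplusB} to handle general $s$. The only cosmetic difference is that you explicitly note $\gap(H_1)>0$ is implicit in \eqref{eq:D_large}, which the paper leaves tacit.
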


\begin{proof}
  Conditions \eqref{eq:D_large} imply that $D$ is invertible.
  We can then write
  \begin{equation}
    \label{eq:Schur_expansion}
    H_s = 
    \begin{bmatrix}
      \Id & sB D^{-1}\\
      0 & \Id
    \end{bmatrix}
    \begin{bmatrix}
      A-s^2B D^{-1}B^* & 0 \\
      0 & D
    \end{bmatrix}
    \begin{bmatrix}
      \Id & 0\\
      sD^{-1}B^* & \Id
    \end{bmatrix}.
  \end{equation}
  The matrices on the left and on the right are invertible because from \eqref{eq:D_large} we get that $\|sBD^{-1}\|<1$.

  From inequality \eqref{eq:UpperSchur} we get
  \begin{equation*}
    \gap\left(A - B D^{-1} B^*\right)
    \geq \frac{\gap(H_1)}{
      \left(1+ \left\|B D^{-1}\right\|\right)^2}
    \geq \frac14 \gap(H_1),
  \end{equation*}
  and therefore by \Cref{eq:gapAplusB},
  \begin{align*}
    \gap\left(A - s^2 B D^{-1} B^*\right)
    &\geq 
      \gap\left(A - B D^{-1} B^*\right) 
      - \big(1-s^2) \left\|B D^{-1} B^*\right\| \\
    &\geq \frac14 \gap(H_1) - \frac{\|B\|^2}{\gap(D)},
  \end{align*}
  which is positive by \eqref{eq:D_large}.  We conclude that the middle matrix in \eqref{eq:Schur_expansion} is invertible for all $s\in[0,1]$, and so is $H_s$.
\end{proof}

\begin{lem}
  \label{lem:block_traceless}
  Let a self-adjoint operator $H_s$ on $\calH \oplus
  \calH^\perp$ be given by
  \eq{
    L:=\begin{bmatrix}
      A & B^\ast \\ B & -A
    \end{bmatrix}}
  where $A=A^\ast$ and $B$ is general.  Then
  \begin{equation}
    \label{eq:gap_traceless}
    \gap(L) \geq
    \sqrt{\gap(A)^2+\gap(B)^2-\norm{[A,B]}}.
  \end{equation}
\end{lem}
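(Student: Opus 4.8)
The plan is to square the (self-adjoint) operator $L$ and invoke $\gap(L)=\sqrt{\gap(L^\ast L)}=\sqrt{\gap(L^2)}$ from \eqref{eq:gap_squared}. Expanding the block product and using $A=A^\ast$ (so that $AB^\ast-B^\ast A=(BA-AB)^\ast=-[A,B]^\ast$),
\[
L^2 = \begin{bmatrix} A^2 + B^\ast B & AB^\ast - B^\ast A \\ BA - AB & A^2 + BB^\ast \end{bmatrix}
= \begin{bmatrix} A^2 + B^\ast B & 0 \\ 0 & A^2 + BB^\ast \end{bmatrix} + \begin{bmatrix} 0 & -[A,B]^\ast \\ -[A,B] & 0 \end{bmatrix} =: D_0 + E .
\]

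Next I would estimate the gap of $L^2$ by treating $E$ as a perturbation of the diagonal part $D_0$. By \eqref{eq:gapAplusB} we have $\gap(L^2)\geq \gap(D_0)-\norm{E}$, and $\norm{E}=\norm{[A,B]}$. For $D_0$, \eqref{eq:gapDirectSum} gives $\gap(D_0)=\min\{\gap(A^2+B^\ast B),\,\gap(A^2+BB^\ast)\}$. Since $A^2$, $B^\ast B$ and $BB^\ast$ are all non-negative, \eqref{eq:gapAplusBpositive} yields $\gap(A^2+B^\ast B)\geq \gap(A^2)+\gap(B^\ast B)$ and $\gap(A^2+BB^\ast)\geq \gap(A^2)+\gap(BB^\ast)$. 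Finally $\gap(A^2)=\gap(A)^2$ because $A$ is self-adjoint, and $\gap(B^\ast B)=\gap(BB^\ast)=\gap(B)^2$ by \eqref{eq:gap_squared} together with $\norm{B^{-1}}=\norm{(B^\ast)^{-1}}$ (and the analogous identities hold trivially, both sides being $0$, when the operators are not invertible). Assembling, $\gap(L^2)\geq \gap(A)^2+\gap(B)^2-\norm{[A,B]}$, and taking square roots gives \eqref{eq:gap_traceless}.

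There is no genuine obstacle here: the proof is a one-line block computation followed by repeated application of the elementary estimates collected in \Cref{lem:basic_gap}. The only point deserving a moment of care is the bookkeeping when one of $A,B,L$ fails to be invertible; this is handled uniformly by the convention $\gap=0$ for non-invertible operators, under which each inequality of \Cref{lem:basic_gap} used above remains valid. Alternatively one may simply restrict to the only non-vacuous case $\gap(A)^2+\gap(B)^2>\norm{[A,B]}$, in which the displayed chain of inequalities already forces $\gap(L^2)>0$, hence $L$ invertible, legitimizing the use of $\gap(L)=\sqrt{\gap(L^2)}$.
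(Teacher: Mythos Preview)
Your proof is correct and follows essentially the same route as the paper: square $L$, split $L^2$ into its block-diagonal part and the off-diagonal commutator, then apply the basic gap estimates of \Cref{lem:basic_gap} (in particular \eqref{eq:gapAplusB}, \eqref{eq:gapAplusBpositive}, \eqref{eq:gapDirectSum} and \eqref{eq:gap_squared}) before taking a square root. Your extra remark on the non-invertible case is a welcome clarification that the paper leaves implicit.
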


\begin{proof}
  We have \eq{L^2=
    \begin{bmatrix}
      A & B^\ast \\ B & -A
    \end{bmatrix}^2 = \begin{bmatrix}
      A^2 + \abs{B}^2 & 0 \\ 0 & A^2 + \abs{B^\ast}^2
    \end{bmatrix} + \begin{bmatrix}
      0 & [A,B^\ast] \\ 
      -[A,B] & 0
    \end{bmatrix}\,.
  }
  We use \Cref{lem:basic_gap} to estimate
  \eq{
    \gap\br{L^2} 
    &\geq \gap\br{\begin{bmatrix}
          A^2 + \abs{B}^2 & 0 \\ 0 & A^2 + \abs{B^\ast}^2
        \end{bmatrix}}-\norm{\begin{bmatrix}
          0 & [A,B^\ast] \\ 
          -[A,B] & 0
        \end{bmatrix}} \\
    &\geq \gap(A)^2+\gap(B)^2-\norm{[A,B]}.
  }
  The conclusion for $\gap(L)$ now follows by \eqref{eq:gap_squared}.
\end{proof}


\section{A short review of the spectral flow}\label{sec:spectral flow}
We give a brief review of the properties of the spectral flow we need; more details may be found in e.g. the textbook \cite{DollSchulzBaldes2023}.

\begin{defn}\label{def:spectral flow of a family}
  Let $\calH$ be a separable Hilbert space. Let $[0,1]\ni t \mapsto A_t$ be an analytic family of self-adjoint Fredholm operators on $\calH$ such that $A_0,A_1$ are moreover invertible. We call such a family \emph{a spectral-flow family}. Let $\Delta\subseteq\RR$ be the essential gap of the family, i.e., \eq{
    \sigma_{\mathrm{ess}}(A_t) \cap \Delta = \varnothing, \qquad t\in[0,1]\,.
  } We assume further that $\sigma(A_i)\cap\Delta=\varnothing$ for $i=0,1$ and that $\Delta$ contains an open interval about zero; let $g\in C^\infty(\RR\to[0,1])$ be so that $g=1$ above $\Delta$, $g=0$ below $\Delta$ and $\supp(g')\subseteq\Delta$. Then \eql{
    \Sf(A) := \int_{t\in[0,1]}\tr_\calH\br{g'(A_t)\partial_t A_t}\dif{t}\,.
  }
\end{defn}

Informally we interpret
$\tr\br{g'(A_t)\partial_t A_t} = \partial_t \tr(g(A_t))$ and apply the
fundamental theorem of calculus to get \eq{ \Sf(A) = \tr\br{g(A_1) -
    g(A_0)}, } which counts the number of eigenvalues which have
transitioned from below to above the gap as we interpolate between $A_0$ and
$A_1$. We refer the reader to \cite{Phi_cmb96} or
\cite[Sec.~4.1]{DollSchulzBaldes2023}) for the extension of the
spectral flow to norm-continuous families.  We also remark that the
interval $[0,1]$ can be replaced with any other interval.

In this work we use the following fundamental properties of the
spectral flow.

\begin{thm}
  \label{thm:spec_flow_basics}
  For any spectral flow family $t\mapsto A_t$, the spectral flow $\Sf(A_\cdot)$ is:
  \begin{enumerate}
  \item Well-defined (i.e., it is independent of the choices of $\Delta$ and $g$ above).
  \item Integer-valued: \eq{
      \Sf(A) \in \ZZ\,.
    }
  \item Additive \cite[Thm.~4.2.1(v)]{DollSchulzBaldes2023}: If $A,B$ are
    two spectral flow families, then \eq{ \Sf(A\oplus B) =
      \Sf(A)+\Sf(B) \,.  }
  \item Concatenation-additive
    \cite[Thm.~4.2.1(ii)]{DollSchulzBaldes2023}: If
    $[0,1] \ni t \mapsto A_t$ is norm-continuous with $A_0$,
    $A_{0.5}$ and $A_1$ invertible, then
    \begin{equation}
      \label{eq:concat}
      \Sf\big([0,1] \ni t \mapsto A_t\big)
      = \Sf\big([0,0.5] \ni t \mapsto A_t\big)
      + \Sf\big([0.5,1] \ni t \mapsto A_t\big).
    \end{equation}
  \item Homotopy invariant \cite[Thm.~4.2.2]{DollSchulzBaldes2023}: Let
    $[0,1]\times[0,1] \ni (t,s) \mapsto h(t,s)$ be a norm-continuous
    family of bounded Fredholm operators.  Consider two continuous paths
    $\gamma_1, \gamma_2 :[0,1] \to [0,1]^2$ with
    $\gamma_1(0)=\gamma_2(0)$ and $\gamma_1(1)=\gamma_2(1)$.  If
    $h\big(\gamma_1(0)\big)$ and $h\big(\gamma_1(1)\big)$ are
    invertible, then
    \eq{\Sf\Big([0,1]\ni t \mapsto h\big(\gamma_1(t)\big)\Big)
      = \Sf\Big([0,1]\ni t \mapsto h\big(\gamma_2(t)\big)\Big).}
  \end{enumerate}
\end{thm}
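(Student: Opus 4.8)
The plan is to derive all five properties directly from the integral formula in \Cref{def:spectral flow of a family}, using as the single workhorse a rewriting of $\Sf(A)$ as a finite sum of differences of ranks of finite-rank spectral projections; full details are in \cite{Phi_cmb96} and \cite[Ch.~4]{DollSchulzBaldes2023}, so I only indicate the structure and the one delicate point.

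The key reduction, which yields integer-valuedness (property~2), is the following. Fix a spectral flow family $t\mapsto A_t$ with essential gap $\Delta$ and cutoff $g$ chosen with $\supp(g')$ a compact subset of an interval $(-a_{\min},a_{\min})\subseteq\Delta$. Since $\sigma_{\mathrm{ess}}(A_t)$ avoids $\Delta$ uniformly, $\chi_{(-a,a)}(A_t)$ has finite rank whenever $(-a,a)\subseteq\Delta$ and $\pm a\notin\sigma(A_t)$; using compactness of $[0,1]$, Kato's analytic perturbation theory, and the invertibility of $A_0,A_1$, one selects a partition $0=t_0<\dots<t_n=1$ and levels $a_i\geq a_{\min}$ with $(-a_i,a_i)\subseteq\Delta$ and $\pm a_i\notin\sigma(A_t)$ for all $t\in I_i:=[t_{i-1},t_i]$. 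On $I_i$, first-order perturbation theory gives $\tr\big(g'(A_t)\partial_tA_t\big)=\sum_k g'(\lambda_k(t))\,\lambda_k'(t)$ for the analytic eigenvalue branches $\lambda_k$ of $A_t$; only the finitely many branches with values in $\supp(g')$ contribute, and the set of branches taking values in $(-a_i,a_i)$ is constant on $I_i$ since $\pm a_i$ stays in the resolvent set, so integrating telescopes each branch and yields
\[
  \int_{I_i}\tr\big(g'(A_t)\partial_tA_t\big)\,\dif t
  = \big(\tr\,\chi_{[0,a_i)}(A_{t_i})-e(t_i)\big)
   -\big(\tr\,\chi_{[0,a_i)}(A_{t_{i-1}})-e(t_{i-1})\big),
\]
where $e(t):=\sum_{|\lambda|<a_{\min}}\big(\chi_{[0,\infty)}(\lambda)-g(\lambda)\big)$ sums over the eigenvalues of $A_t$ in $(-a_{\min},a_{\min})$ and is independent of $i$ (we used $g(\lambda)=\chi_{[0,\infty)}(\lambda)$ for $a_{\min}\leq|\lambda|<a_i$). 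Summing over $i$, the $e$-terms telescope to $e(1)-e(0)=0$ as $A_0,A_1$ have no spectrum in $\Delta$, leaving $\Sf(A)=\sum_i\big(\tr\,\chi_{[0,a_i)}(A_{t_i})-\tr\,\chi_{[0,a_i)}(A_{t_{i-1}})\big)\in\ZZ$.

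Well-definedness (property~1) is then immediate: two cutoffs $g_1,g_2$ for the same $\Delta$ differ by a smooth function with compact support inside $\Delta$, so $(g_1-g_2)(A_t)$ has finite rank, $\tr\big((g_1-g_2)'(A_t)\partial_tA_t\big)=\tfrac{d}{dt}\tr\big((g_1-g_2)(A_t)\big)$ is legitimate, and its integral equals $\tr\big((g_1-g_2)(A_1)\big)-\tr\big((g_1-g_2)(A_0)\big)=0$; two essential gaps are compared through a common small subgap around $0$. Concatenation-additivity (property~4) is just $\int_0^1=\int_0^{1/2}+\int_{1/2}^1$, valid once $A_{1/2}$ is invertible so that $1/2$ is an admissible endpoint. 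Additivity over direct sums (property~3) follows because $g'(A_t\oplus B_t)=g'(A_t)\oplus g'(B_t)$ and $\partial_t(A_t\oplus B_t)=\partial_tA_t\oplus\partial_tB_t$ are block-diagonal, so the integrand splits as the sum of the two integrands, after passing to the common (intersection) essential gap, which still contains a neighborhood of $0$.

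Homotopy invariance (property~5) is the one genuinely two-dimensional statement. I would subdivide $[0,1]^2$ into cells small enough that each admits a single level $a$ with $\pm a$ in the resolvent set of $h(x)$ for all $x$ in the cell and $\chi_{(-a,a)}(h(x))$ of constant finite rank there; the spectral flow along the oriented boundary of such a cell then reduces, exactly as in the reduction above, to the net number of crossings of $0$ by the eigenvalues of a finite-dimensional self-adjoint loop, which vanishes because $\tr\,\chi_{[0,a)}(h(x))$ returns to its value around a loop. Since any two paths with common endpoints at which $h$ is invertible can be joined through finitely many paths each differing from the next by one cell boundary, concatenation-additivity forces the two spectral flows to coincide. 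The main obstacle throughout is the bookkeeping around the cutoffs: on each subinterval or cell one must pick a level $a$ lying inside the essential gap yet avoiding the spectrum of every operator there, which is arranged by choosing $a$ among the non-eigenvalues of one fixed operator and then shrinking the subinterval or cell using continuity (analyticity) of the discrete spectrum; granting this, the rest is the fundamental theorem of calculus, cyclicity of the trace on finite-rank operators, and first-order perturbation theory. The extension from analytic to merely norm-continuous families is not reproved here; we refer to \cite{Phi_cmb96,DollSchulzBaldes2023}.
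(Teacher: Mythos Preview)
The paper does not actually prove \Cref{thm:spec_flow_basics}: it is stated in the appendix as a review, with each item backed only by a citation to \cite{Phi_cmb96} or \cite{DollSchulzBaldes2023}. Your proposal therefore goes strictly beyond what the paper does, and your sketch is essentially the standard Phillips argument those references contain: rewrite the integral as a finite sum of rank differences of spectral projections $\chi_{[0,a_i)}(A_{t_i})$, from which integer-valuedness, well-definedness, and homotopy invariance all fall out. The structure is correct and the one delicate point (choosing levels $a_i$ in the resolvent set on each subinterval or cell) is identified accurately.

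Two small remarks. First, in your telescoping of the $e$-terms the sign is reversed (you get $e(0)-e(1)$, not $e(1)-e(0)$), though this is immaterial since both vanish. Second, your homotopy argument for property~5 tacitly assumes that the endpoints of $\gamma_1,\gamma_2$ lie on cell boundaries and that the cell decomposition can be chosen compatibly with the two paths; this is fine but requires either approximating $\gamma_1,\gamma_2$ by piecewise-linear paths along the grid or invoking simple connectedness of $[0,1]^2$ more directly. Neither point is a genuine gap.
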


\bigskip
\bigskip
\noindent\textbf{Acknowledgments.} 
We are  indebted to Shinsei Ryu and Tom Stoiber for useful
discussions.  GB was partially supported by the NSF grant
DMS-2510345.  JS was partially suppored by the NSF grant DMS-2510207.

\begingroup
\let\itshape\upshape
\printbibliography
\endgroup
\end{document}